\title{The $n$-dimensional Propositional Calculus}
\newcommand{\e}{\mathsf e}
\newcommand{\cc}{\mathsf c}
\newcommand{\ch}{\mathsf h}
\newcommand{\dd}{\mathsf d}
\newcommand\TTTT{%
 \textsf{T\kern-0.15em\raisebox{-0.55ex}T\kern-0.15emT\kern-0.15em\raisebox{-0.55ex}2}%
}
\providecommand{\U}[1]{\protect\rule{.1in}{.1in}}
\newtheorem{theorem}{Theorem}
\theoremstyle{plain}
\newtheorem{corollary}{Corollary}
\newtheorem{definition}{Definition}
\newtheorem{example}{Example}
\newtheorem{lemma}{Lemma}
\newtheorem{proposition}{Proposition}
\numberwithin{equation}{section}
\title[From Boolean products to semiring products]{Boolean-like algebras of finite dimension: From Boolean products to semiring products   
}
\author[A. Bucciarelli]{A. Bucciarelli}\address{A. Bucciarelli, Universit\'e de Paris}\email{buccia@irif.fr}
\author[A. Ledda]{A. Ledda}\address{A. Ledda, F. Paoli, Universit\`a di Cagliari}\email{antonio.ledda@unica.it,\, paoli@unica.it}
\author[F. Paoli]{F. Paoli}
\author[A. Salibra]{A. Salibra}\address{A. Salibra, Universit\'e de Paris}\email{salibra@unive.it}
\date{\today}
\thanks{{\bf Corresponding author}: {\bf Francesco Paoli}, {\ttfamily paoli@unica.it}
}
\keywords{Boolean algebras, Boolean-like algebras of finite dimension, Primal algebras, Boolean products, Semiring products.\\
\emph{2010 Mathematics Subject Classification.} Primary: 08B25; Secondary: 08B05, 08A70.}
\begin{document}

\maketitle
\begin{abstract}
We continue the investigation, initiated in \cite{SBLP}, of Boolean-like algebras of dimension $n$ ($n\mathrm{BA}$s), algebras having $n$ constants $\e_1,\dots,\e_n$, and an $(n+1)$-ary operation $q$ (a ``generalised if-then-else'') that induces a decomposition of the algebra into $n$ factors through the so-called $n$-central elements. Varieties of $n\mathrm{BA}$s  share many remarkable properties with the variety of Boolean algebras and with primal varieties. Putting to good use the concept of a central element, we extend the Boolean power construction to that of a semiring power and we prove two representation theorems: (i) Any pure $n\mathrm{BA}$ is isomorphic to the algebra of $n$-central elements of a Boolean vector space; (ii) Any member of a variety of $n$BAs with one generator is isomorphic to a Boolean power of this generator. This yields a new proof of Foster's theorem on primal varieties.

\end{abstract}

\section{Introduction}

Janusz Czelakowski began his scientific career with a series of investigations on partial Boolean algebras, in the context of a semantical analysis of quantum logic (see e.g. \cite{Czel1, Czel2, Czel3}. A different variation on the concept of a Boolean algebra will be at the centre of this paper: the notion of a \emph{Boolean-like algebra of finite dimension}.

In a number of papers \cite{first, SLP2017, SLP18, SBLP} we attempted to combine an algebraic approach to if-then-else statements in programming languages with a general theory of direct decompositions in universal algebra. A \emph{Church algebra of dimension} $2$ is an algebra $\mathbf{A}$ whose type contains a ternary term $q$ (having the if-then-else connective as an intended interpretation) and two constants $0$ and $1$ (representing falsity and truth, respectively); it has the property that for every $a,b\in A$, $q^{\mathbf{A}}( 1^{\mathbf{A}},a,b) =a$ and $q^{\mathbf{A}}( 0^{\mathbf{A}},a,b) =b$. An element $e$ of a Church algebra $\mathbf{A}$ of dimension $2$ is called $2$\emph{-central} if $\mathbf{A}$ can be decomposed as the product $\mathbf{A}/\theta ( e,0) \times \mathbf{A}/\theta ( e,1) $, where $\theta ( e,0) $ ($\theta ( e,1) $) is the smallest congruence on $\mathbf{A}$ that collapses $e$ and $0$ ($e$ and $1$). \emph{Boolean-like algebras of dimension }$2$, investigated in \cite{first} under the name of \emph{Boolean-like algebras}, are Church algebras of dimension $2$ where every element is $2$-central -- the chosen denomination being justified by the fact that the variety of all such algebras in the language $(q,0,1)$ is term-equivalent to the variety of Boolean algebras. 

It is all too natural to generalise this viewpoint to cover the case of decompositions of algebras into finitely many factors. In \cite{SBLP}, we considered algebras $\mathbf{A}$ having $n$ designated elements $\e_1,\dots,\e_n$ ($n\geq 2$) and an $n+1$-ary operation $q$ (a sort of "generalised if-then-else") that induces a decomposition of $\mathbf{A}$ into $n$, rather than just $2$, factors. These algebras were called, naturally enough, \emph{Church algebras of dimension }$n$ ($n$CA), while algebras $\mathbf{A}$ all of whose elements induce an $n$-ary factorisation of this sort (i.e., are $n$-central) were given the name of \emph{Boolean-like algebras of
dimension }$n$ ($n\mathrm{BA}$s). Free $\mathcal V$-algebras (for $\mathcal V$ a variety), lambda algebras, semimodules over semirings -- hence, in particular, Boolean vector spaces -- give rise to Church algebras which, in general, have dimension greater than 2.

Varieties of $n\mathrm{BA}$s share many remarkable properties with the variety of Boolean algebras. We showed that any variety of $n\mathrm{BA}$s is generated by the $n\mathrm{BA}$s of cardinality $n$.  In the pure case (i.e., when the type includes the generalised if-then-else $q$, the $n$ constants, and nothing else), the variety is generated by a unique algebra $\mathbf{n}$ of universe $\{\e_1,\dots,\e_n\}$, so that any pure $n\mathrm{BA}$ is, up to isomorphism, a subalgebra  of $\mathbf{n}^I$, for a suitable set $I$. Another remarkable property of the $2$-element Boolean algebra is the definability of all finite Boolean functions in terms e.g. of the connectives {\sc and, or, not}. This property is inherited by the algebra $\mathbf{n}$:  all finite functions on $\{\e_1,\dots,\e_n\}$ are term-definable, so that the variety of pure $n\mathrm{BA}$s is primal. More generally, a variety of an arbitrary type with one generator is primal if and only if it is a variety of $n\mathrm{BA}$s.

In the present paper, we explore further the connections between Boolean algebras of finite dimension and primal algebras, as well as other consolidated topics in universal algebra, like the theory of \emph{Boolean powers}. Although Boolean powers, for which see e.g. \cite{Burris, Pinus}, emerged in the literature in the context of individual classes of algebras, like rings \cite{AK48}, Post algebras \cite{Rosenbloom} or Banach algebras \cite{Gelfand}, they were given a systematic and general treatment in the work of Foster \cite{Fa}. This construction is important and useful in that it permits to transfer properties of Boolean algebras to other, usually lesser-known, classes of algebras. Applications include answers to the decision problem for the first-order theories of certain varieties \cite{BM}, as well as representation results like Foster's celebrated theorem on primal varieties \cite[Thm. 7.4]{BS}, according to which any member of a variety generated by a primal algebra is a Boolean power of the generator.

In this paper, we parlay the theory of $n$-central elements into an extension to
arbitrary semirings of the technique of Boolean powers. We algebraically define the
semiring power $\mathbf{E}[R] $ of an arbitrary algebra $\mathbf{E}$ by a semiring $R$ as an algebra over the set of central elements of a certain semimodule. 
We obtain the following results:
\begin{itemize}
    \item If the semiring $R$ is a Boolean algebra, then the algebraically defined semiring power is isomorphic to the classical,  topologically defined, Boolean power \cite{BS}.
    \item For every semiring $R$, the semiring power $\mathbf{E}[R]$ is isomorphic to the Boolean power of $\mathbf E$ by the Boolean algebra of the complemented and commuting elements of $R$.
\end{itemize}
We also prove two representation theorems. We first show that any pure $n$BA $\mathbf A$ contains a Boolean algebra $B_\mathbf A$; then we represent  $\mathbf A$, up to isomorphism, as the $n$BA of $n$-central elements of the Boolean vector space $(B_\mathbf A)^n$. 
In the second representation theorem it is shown that any $n$BA in a variety of $n$BAs with one generator is isomorphic to a Boolean power, a result from which
Foster's theorem follows as a corollary.

\section{Preliminaries\label{benzina}}
The notation and terminology in this paper are pretty standard. For
concepts, notations and results not covered hereafter, the reader is
referred to \cite{BS,McKenzieMT87} for universal algebra and to \cite{Golan} for semirings
and semimodules. 
As to the rest, superscripts that mark the difference between operations and operation symbols will be dropped whenever the context is sufficient for a disambiguation. 
\subsection{Factor Congruences and Decomposition}
\begin{definition}
A tuple $(\theta_{1},\dots,\theta_n)$ of congruences on $\mathbf{A}
$ is a family of \emph{complementary factor congruences} if the function
$f:\mathbf{A}\rightarrow \prod\limits_{i=1}^n\mathbf{A}/\theta _{i}$
defined by $f(a) =(a/\theta _{1},\dots,a/\theta _{n})$ is an
isomorphism. When $\left\vert I\right\vert =2$, we say that $( \theta_{1},\theta _{2})$ is a pair of complementary factor congruences.
\end{definition}

A \emph{factor congruence} is any congruence which belongs to a family of
complementary factor congruences.

\begin{theorem}
\label{thm:cong} A tuple $(\theta_{1},\dots,\theta_n)$ of congruences on $\mathbf{A}$ is a family of complementary factor congruences exactly when:
\begin{enumerate}
\item $\bigcap_{1\leq i\leq n}\theta_{i}=\Delta$;

\item $\forall (a_1,\dots,a_n)\in A^n$, there is $u\in A$ such that $a_i\theta_{i}\,u$,
for all $1\leq i\leq n$.
\end{enumerate}
\end{theorem}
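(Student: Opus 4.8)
The plan is to analyse the canonical map
$f:\mathbf A\to\prod_{i=1}^n\mathbf A/\theta_i$, $f(a)=(a/\theta_1,\dots,a/\theta_n)$,
directly, splitting the biconditional into the two questions of injectivity and surjectivity. First I would record that $f$ is always a homomorphism, being the product of the natural surjections $\pi_i:\mathbf A\to\mathbf A/\theta_i$; and since a bijective homomorphism between algebras of the same type is automatically an isomorphism (its set-theoretic inverse commutes with all the operations), it suffices to characterise when $f$ is one-to-one and when it is onto.

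For injectivity: $f(a)=f(b)$ says $a/\theta_i=b/\theta_i$, i.e. $(a,b)\in\theta_i$, for every $i$, which is the same as $(a,b)\in\bigcap_{i}\theta_i$. Hence $f$ is injective exactly when $\bigcap_{i}\theta_i=\Delta$, which is condition (1). For surjectivity: an arbitrary element of $\prod_{i}\mathbf A/\theta_i$ is presented by an arbitrary tuple of coset representatives, i.e. it has the form $(a_1/\theta_1,\dots,a_n/\theta_n)$ for some $(a_1,\dots,a_n)\in A^n$, and such an element lies in the image of $f$ precisely when there is $u\in A$ with $u/\theta_i=a_i/\theta_i$, i.e. $a_i\,\theta_i\,u$, for all $i$. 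Thus $f$ is surjective exactly when condition (2) holds.

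Combining the two equivalences yields the statement: $(\theta_1,\dots,\theta_n)$ is a family of complementary factor congruences — that is, $f$ is an isomorphism — if and only if both (1) and (2) hold; this covers both directions of the "exactly when" at once. I do not expect a genuine obstacle here. The only points worth stating carefully are that condition (2) is a verbatim reformulation of the surjectivity of $f$ — because every element of the product is named by some tuple of representatives — and that, for algebras, a bijective homomorphism is already an isomorphism, so one never has to verify separately that $f^{-1}$ preserves the operations. This is the $n$-ary analogue of the classical characterisation of a pair of complementary factor congruences, and the argument is the same mutatis mutandis.
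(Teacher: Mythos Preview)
Your proof is correct and is precisely the standard argument; the paper itself states this theorem in the preliminaries without proof, treating it as a well-known fact from universal algebra, so there is nothing further to compare.
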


Therefore $(\theta_{1},\theta_{2})$ is a pair of complementary factor
congruences if and only if $\theta_{1}\cap\theta_{2}=\Delta$ and $\theta_{1}\circ\theta_{2}=\nabla$. The pair $(\Delta,\nabla)$ corresponds
to the product $\mathbf{A}\cong\mathbf{A}\times\mathbf{1}$, where $\mathbf{1}
$ is a trivial algebra; obviously $\mathbf{1}\cong\mathbf{A}/\nabla$ and 
$\mathbf{A}\cong\mathbf{A}/\Delta$. The set of factor congruences of $\mathbf A$
is not, in general, a sublattice of $\mathrm{Con}(\mathbf{A})$.

Factor congruences can be characterised in terms of certain algebra
homomorphisms called \emph{decomposition operators} (see \cite[Def.~4.32]{McKenzieMT87} for additional details). 

\begin{definition}
\label{def:decomposition} A \emph{decomposition operator} on an algebra $\mathbf{A}$ is a function $f:A^{n}\rightarrow A$ satisfying the following
conditions: 
\begin{description}
\item[D1] $f( x,x,\dots,x) =x$;
\item[D2] $f( f( x_{11},x_{12},\dots,x_{1n}),\dots,f(x_{n1},x_{n2},\dots,x_{nn}))=f(x_{11},\dots,x_{nn})$;
\item[D3] $f$ is an algebra homomorphism from $\mathbf{A}^n$ to $ \mathbf{A}$.
\end{description}
\end{definition}

Axioms (D1)-(D3) can be equationally expressed.

There is a bijective correspondence between families of complementary factor
congruences and decomposition operators, and thus, between decomposition
operators and factorisations of an algebra.

\begin{theorem}
\label{prop:pairfactor} Any decomposition operator $f:\mathbf{A}^{n}\rightarrow \mathbf{A}$ on an algebra $\mathbf{A}$ induces a family of
complementary factor congruences $\theta _{1},\dots,\theta _{n}$, where each $\theta _{i}\subseteq A\times A$ is defined by: 
\begin{equation*}
a\ \theta _{i}\ b\ \ \text{iff}\ \ f(a,\dots,a,b,a,\dots,a)=a\qquad(\text{$b$ at position $i$}).
\end{equation*}
Conversely, any family $\theta _{1},\dots,\theta _{n}$ of complementary factor
congruences induces a decomposition operator $f$ on $\mathbf{A}$: 
$f(a_1,\dots,a_n)=u$ iff $a_{i}\,\theta _{i}\,u$, for all $i$, where such an element $u$ is provably unique.
\end{theorem}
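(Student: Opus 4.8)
The plan is to prove the bijective correspondence in both directions, showing the two constructions are mutually inverse.

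\textbf{From decomposition operators to factor congruences.}
Given a decomposition operator $f:\mathbf{A}^n\to\mathbf A$, first I would check that each relation $\theta_i$ defined by $a\,\theta_i\,b$ iff $f(a,\dots,b,\dots,a)=a$ (with $b$ at position $i$) is a congruence. Reflexivity is (D1); for symmetry and transitivity, as well as the compatibility with the basic operations, the idea is to exploit (D2) and (D3) together: (D3) lets us push $f$ through the operation symbols of $\tau$, while (D2) is the algebraic substitute for the ``diagonal'' argument that normally makes these relations equivalence relations. For instance, to show $\theta_i$ is transitive one feeds the witnessing equalities $f(\dots,b,\dots)=a$ and $f(\dots,c,\dots)=b$ into (D2) applied to a suitable array that is diagonal except in column $i$. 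Then I would verify the two conditions of Theorem~\ref{thm:cong}: for (1), if $a\,\theta_i\,b$ for every $i$ then $f(b,\dots,b)=\dots$; more precisely one applies (D2) to the $n\times n$ array whose $i$-th row is $(a,\dots,a,b,a,\dots,a)$ with $b$ in column $i$, whose diagonal is therefore $(b,b,\dots,b)$, while the outer $f$ evaluates to $f(a,\dots,a)=a$ by the hypotheses and (D1), giving $a=b$. For (2), given $(a_1,\dots,a_n)$ the element $u:=f(a_1,\dots,a_n)$ satisfies $a_i\,\theta_i\,u$: apply (D2) to the array whose $i$-th row is $(a_i,\dots,a_i,a_i,a_i,\dots,a_i)$ except it should be the row that is constantly $a_i$ off the diagonal? — here one uses the array with $(a_1,\dots,a_n)$ on the diagonal and $a_i$ elsewhere in row $i$, so that each inner $f$ collapses by (D1) to $a_i$ while the outer $f$ is $f(a_1,\dots,a_n)=u$; rearranging via (D2) shows $f(u,\dots,u,a_i,u,\dots,u)=u$, i.e. $a_i\,\theta_i\,u$. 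Hence $(\theta_1,\dots,\theta_n)$ is a family of complementary factor congruences by Theorem~\ref{thm:cong}.

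\textbf{From factor congruences to a decomposition operator.}
Conversely, suppose $(\theta_1,\dots,\theta_n)$ is a family of complementary factor congruences. Condition (2) of Theorem~\ref{thm:cong} provides, for each $(a_1,\dots,a_n)$, an element $u$ with $a_i\,\theta_i\,u$ for all $i$; condition (1) forces $u$ to be unique, since two such elements are congruent mod every $\theta_i$, hence mod $\bigcap_i\theta_i=\Delta$. So $f(a_1,\dots,a_n):=u$ is a well-defined function. Property (D1) is immediate: $x$ itself witnesses $u$ for the tuple $(x,\dots,x)$ by reflexivity. For (D3), I would use that each $\theta_i$ is a congruence: given a basic $k$-ary operation symbol $g$ and inputs, the element $g^{\mathbf A}(f(\bar a^{(1)}),\dots,f(\bar a^{(k)}))$ is $\theta_i$-related to $g^{\mathbf A}(a_i^{(1)},\dots,a_i^{(k)})$ for each $i$, so by uniqueness it equals $f$ applied componentwise, which is exactly the homomorphism condition. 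For (D2), one checks that both sides are $\theta_i$-related to the diagonal entry $x_{ii}$ for every $i$ and invokes uniqueness again.

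\textbf{The two constructions are inverse, and the main obstacle.}
Finally I would verify the round trips: starting from $f$, forming the $\theta_i$, and then reconstructing the operator returns $f$ (because $f(a_1,\dots,a_n)$ is $\theta_i$-related to $a_i$, as shown above, and the reconstruction is characterised by that very property together with uniqueness); and starting from the $\theta_i$, the operator $f$ we build induces back exactly the original congruences (unwinding the definition: $f(a,\dots,b,\dots,a)=a$ iff $a$ is the unique element $\theta_j$-related to $a$ for $j\neq i$ and $\theta_i$-related to $b$, which says precisely $a\,\theta_i\,b$). I expect the main obstacle to be the bookkeeping in the ``forward'' direction — verifying that the $\theta_i$ are genuinely congruences and checking condition (1) of Theorem~\ref{thm:cong} — since this is where (D2) has to be applied to cleverly chosen $n\times n$ arrays and it is easy to misplace indices; everything on the ``backward'' side reduces cleanly to the uniqueness afforded by $\bigcap_i\theta_i=\Delta$. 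I would also remark that for $n=2$ this recovers the classical correspondence between pairs of complementary factor congruences and decomposition operations, so the proof can be organised as a straightforward generalisation of the binary case.
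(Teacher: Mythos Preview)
The paper does not actually prove this theorem: it is stated in the preliminaries as a known background result, with a pointer to \cite[Def.~4.32]{McKenzieMT87} for details on decomposition operators, so there is no ``paper's own proof'' to compare against. Your sketch is the standard argument and is essentially correct; the array manipulations you describe for conditions (1) and (2) of Theorem~\ref{thm:cong} are the right ones (for (2), the matrix whose $j$-th row equals $(a_1,\dots,a_n)$ for $j\neq i$ and is constantly $a_i$ for $j=i$ gives $f(u,\dots,a_i,\dots,u)=u$ via (D2) exactly as you indicate), and the backward direction via uniqueness from $\bigcap_i\theta_i=\Delta$ is clean. The only places that would need tightening in a written-out version are the symmetry and transitivity of each $\theta_i$, where you are deliberately vague; these also go through by feeding appropriate arrays into (D2), but the index bookkeeping is precisely the obstacle you flag.
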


\subsection{Semimodules and Boolean vector spaces}

A \emph{semiring} $R$ \cite{Golan} is an algebra $(R,+,\cdot ,0,1)$  such that $(R,+,0)$ is a commutative monoid, $(R,\cdot ,1)$ is a monoid, and the following equations hold: 
\begin{description}
\item[SR1] $x0= 0x= 0$;
\item[SR2] $x(y+z)=xy+yz$;
\item[SR3] $(y+z)x=yx+zx$.
\end{description}
Thus, in
particular, rings with unit are semirings in which every element has an additive
inverse. Thus, in particular, rings with unit are semirings in which every element has an additive
inverse. Actually, the set of all ideals of a given ring form an idempotent semiring under addition and multiplication of ideals. Also, any bounded distributive lattice $(L,\vee ,\wedge ,0,1)$ is a commutative, idempotent semiring under join and meet.

An $n$-tuple of elements $r_1,\dots,r_n$ of a semiring is \emph{fully orthogonal} if $r_1+\dots +r_n=1$ and $r_ir_j=0$ for every $i\neq j$.

\begin{definition}
If $R$ is a semiring, a \emph{(left) }$R$\emph{-semimodule} (see \cite{Golan})
is a commutative monoid $(V,+,\mathbf{0})$ for which we have a function $R\times V\rightarrow V$, denoted by $(r,\mathbf{v})\mapsto r\mathbf{v}$ and
called \emph{scalar multiplication}, which satisfies the following
conditions, for all elements $r,s\in R$ and all elements $\mathbf{v},\mathbf{w}\in V$:
\begin{description}
\item[SM1] $(r\cdot s)\mathbf{v}=r(s\mathbf{v});$
\item[SM2] $r(\mathbf{v}+\mathbf{w})=r\mathbf{v}+r\mathbf{w}$ and $(r+s)%
\mathbf{v}=r\mathbf{v}+s\mathbf{v};$
\item[SM3] $0\mathbf{v}=\mathbf{0}$ and $1\mathbf{v}=\mathbf{v}.$
\end{description}
\end{definition}

An $R$-semimodule is called: (i) a \emph{module} if $R$ is a ring;
(ii) a \emph{vector space} if $R$ is a field; (iii) a \emph{Boolean vector
space}  if $R$ is a Boolean algebra (see \cite{GL} for basic facts on Boolean vector spaces). The elements of an $R$-semimodule are called \emph{vectors}.

We present an example of an $R$-semimodule that fails, in general, to be a module.
\begin{example}\label{emmevusemiring}
Let $\mathbf{L} = ( L, \land, \lor, \cdot, \slash, \backslash, 0, 1 )$ be a bounded integral residuated lattice \cite{GJKO}. Then $R(\mathbf{L}) = (L, \lor, \cdot, 0, 1)$ is a semiring and $(L, \land, 1)$ is an $R(\mathbf{L})$-semimodule (see e.g. \cite{Gerla}). The scalar multiplication is given by $rv := v/r$, for all $r,v \in L$.
\end{example}

If $V$ is an $R$-semimodule and $E\subseteq V$ then we denote by

\[R\langle E\rangle =\{\sum_{i=1}^{n}r_{i}\e_{i}:r_{i}\in R,\e_{i}\in E,n\in \mathbb{N}\},\]

the set of linear combinations of elements of $E$. The set $E$ is called a 
\emph{free basis} of $V$ if $R\langle E\rangle =V$ and each vector in $V$
can be expressed as a linear combination of elements in $E$ in exactly one
way. An $R$-semimodule having a free basis $E$ is denoted by 
$R\langle E\rangle $ and is called the $R$\emph{-semimodule freely
generated by $E$}.

If $\mathbf{v}\in R\langle E\rangle $, then there exist $\e_{1},\dots ,\e_{n}\in E$ and scalars $r_{1},\dots ,r_{n}\in R$ such $\mathbf{v}=\sum_{i=1}^{n}r_{i}\e_{i}$. 

The \emph{coordinates} $v_{\mathsf{d}}$ ($\dd\in E$) of a vector $\mathbf{v}=\sum_{i=1}^{n}r_{i}\e_{i}$ w.r.t. the free basis $E$ are defined as follows:
$$v_{\mathsf{d}}=\begin{cases} 
r_i   &\text{if $\mathsf{d}= \e_{i}$ ($i=1,\dots,n$)}\\
 0 &\text{if $\mathsf{d}\neq \e_{1},\dots,\e_n$}.\\
\end{cases}
$$
Each element $\mathbf{v}$ of $R\langle E\rangle $ can be represented by the formal series $\mathbf{v}=\sum_{\e\in E}v_{\e}\e$, where almost all scalars $v_{\e}$ coincide with $0$.

The next example will play an important role in this paper.
\begin{example}
\label{exa:partition} (\emph{$n$-Sets}) Let $X$ be a set.
An \emph{$n$-subset} of $X$ is a sequence $(Y_{1},\dots ,Y_{n})$ of subsets 
$Y_{i}$ of $X$. We denote by $\mathrm{Set}_{n}(X)$ the family of all $n$-subsets of $X$.  $\mathrm{Set}_{n}(X)$ can be
viewed as the universe of a Boolean vector space over the powerset $\mathcal P(X)$
with respect to the following operations: 
$$(Y_{1},\dots ,Y_{n})+(Z_{1},\dots ,Z_{n})=(Y_{1}\cup Z_{1},\dots ,Y_{n}\cup
Z_{n})$$
and, for every $Z\subseteq X$, 
$$Z(Y_{1},\dots ,Y_{n})=(Z\cap Y_{1},\dots ,Z\cap Y_{n}).$$
$\mathrm{Set}_{n}(X)$ is freely generated by the $n$-sets $\e_{1}=(X,\emptyset ,\dots ,\emptyset ),\dots, \e_n=(\emptyset,\dots ,\emptyset ,X)$.  Thus, an arbitrary $n$-set $(Y_{1},\ldots ,Y_{n})$ has the canonical representation $Y_{1}\e_1+\dots +Y_{n}\e_n$ as a vector.
\end{example}

\subsection{Church algebras of finite dimension\label{dobbiaco}}

In \cite{SBLP} we introduced \emph{Church algebras of dimension }$n$, algebras having $n$ designated elements $\e_1,\dots,\e_n$ ($n\geq 2$) and an operation $q$ of arity $n+1$ (a sort of ``generalised if-then-else'') satisfying $q(\e_i,x_1,\dots,x_n)=x_i$.  The operator $q$ induces, through the so-called $n$-central elements, a decomposition of the algebra into $n$ factors.

\begin{definition}
\label{def:nCH}An algebra $\mathbf{A}$ of type $\tau $ is a \emph{Church algebra of
dimension} $n$ (an $n\mathrm{CH}$, for short) if there are term definable
elements $\e_{1}^{\mathbf{A}},\e_{2}^{\mathbf{A}},\dots ,\e_{n}^{\mathbf{A}}\in
A$ and a term operation $q^{\mathbf{A}}$ of arity $n+1$ such that, for all $%
b_{1},\dots ,b_{n}\in A$ and $1\leq i\leq n$, $q^{\mathbf{A}}(\e_{i}^{\mathbf{%
A}},b_{1},\dots ,b_{n})=b_{i}$. A variety $\mathcal{V}$ of type $\tau $ is a 
\emph{variety of algebras of dimension }$n$ if every member of $\mathcal{V}$
is an $n\mathrm{CH}$ with respect to the same terms $q,\e_{1},\dots ,\e_{n}$.
\end{definition}

%
%

{If $\mathbf{A}$ is an }$n\mathrm{CH}${,} then we call the algebra $\mathbf{A}_{0}=(A,q^{\mathbf{A}},\e^{\mathbf{A}}_{1},\dots ,\e_{n}^{\mathbf{A}})$ the \emph{pure reduct} of $\mathbf{A}$.

Church algebras of dimension $2$ were introduced as Church algebras in \cite{MS08} and studied in \cite{first}.  Examples of Church algebras of dimension $2$ are Boolean algebras (with $q(x,y,z) =(x\wedge z)\vee (\lnot x\wedge y)$) or rings with unit (with $q( x,y,z) =xz+y-xy$). Next, we list some relevant examples of Church algebras having dimension greater than $2${.}


\begin{example}
\label{cocchio} (\emph{Semimodules})
Let $V$ be an $R$-semimodule freely generated by a finite set $E=\{\e_{1},\dots ,\e_n\}$. Then we define an operation $q$ of arity $n+1$ as
follows (for all $\mathbf{v}=\sum_{j=1}^nv_{j}\e_j$
and $\mathbf{w}^{i}=\sum_{j=1}^nw_{j}^{i}\e_j$):
$$
q(\mathbf{v},\mathbf{w}^{1},\dots ,\mathbf{w}^{n})=\sum_{i=1}^{n}v_{i}\mathbf{w}^{i}=\sum_{k=1}^n(\sum_{i=1}^{n}v_{i}\cdot w_{k}^{i})\e_k.
$$
Under this definition, $V$ becomes an $n\mathrm{CH}$.
Each $\mathbf{w}^{i}$ in the definition of $q$
can be viewed as the $i$-th column vector of an $n\times n$ matrix $M$. The
operation $q$ does nothing but express in an algebraic guise the
application of the linear transformation encoded
by $M$ to the vector $\mathbf{v}$.
\end{example}

A special instance of the previous example, described below, will play a role in what follows.

\begin{example}
\label{exa:partition2} (\emph{$n$-Sets}) Consider the Boolean vector space $\mathrm{Set}_{n}(X)$ from Example \ref{exa:partition}. Retaining the notation from that example, the $q$ operator defined for generic semimodules in Example \ref{cocchio} can be given here an explicit description as follows, for all $\mathbf{y}^{i}= Y^i_{1}\e_1+\dots +Y^i_{n}\e_n$:
$$
q( \mathbf{y}^0,\mathbf{y}^1,\dots ,\mathbf{y}^n) =(\bigcup\limits_{i=1}^{n}Y^0_{i}\cap Y_{1}^{i},\dots,\bigcup\limits_{i=1}^{n}Y^0_{i}\cap Y_{n}^{i}).
$$
\end{example}

{In \cite{vaggione}, D. Vaggione introduced the notion
of \emph{central element} in order to study algebras whose complementary factor
congruences can be replaced by certain elements of their universes. Usually, if a
manageable characterization of such elements is available, one gets important
insights into the structure theories of the algebras at issue. To list a few
examples, central elements coincide with central idempotents in rings with
unit, with complemented elements in $FL_{ew}$-algebras, which form the equivalent algebraic semantics of the full Lambek calculus with exchange and weakening, and with members of
the centre in ortholattices. In \cite{first}, T. Kowalski and three of the
present authors investigated central elements in Church algebras of dimension }$2$. 
In \cite{SBLP}, the idea was generalised to Church algebras of arbitrary finite dimension. 


\begin{definition}
\label{def:ncentral} If $\mathbf{A}$ is an $n\mathrm{CH}$, then $c\in A$
is called \emph{$n$-central} if the sequence of congruences $(\theta (c,\e_{1}),\dots
,\theta (c,\e_{n}))$ is an $n$-tuple of complementary factor congruences of $\mathbf{A}$. A central element $c$ is \emph{nontrivial} if $c\notin\{\e_{1},\dots ,\e_{n}\}$.
\end{definition}

The following characterisation of $n$-central elements, as well as the subsequent elementary result about them, were also proved in \cite{SBLP}.

\begin{theorem}
\label{thm:centrale} If $\mathbf{A}$ is an $n\mathrm{CH}$ of type $\tau $
and $c\in A$, then the following conditions are equivalent:

\begin{enumerate}
\item $c$ is $n$-central;

\item $\bigcap_{i\leq n}\theta (c,\e_{i})=\Delta $;

\item for all $a_{1},\dots ,a_{n}\in A$, $q(c,a_{1},\dots ,a_{n})$ is the
unique element such that $a_{i}\ \theta (c,\e_{i})\ q(c,a_{1},\dots ,a_{n})$,
for all $1\leq i\leq n$;

\item The following conditions are satisfied:
\begin{description}
\item[B1] $q(c,\e_{1},\dots ,\e_{n})=c$.

\item[B2] $q(c,x,x,\dots ,x)=x$ for every $x\in A$.

\item[B3] If $\sigma \in \tau $ has arity $k$ and $\mathbf x$ is a $n\times k$ matrix  of elements of $A$, then\\
$
q(c,\sigma (\mathbf x_{1}),\dots ,\sigma (\mathbf x_{n}))=\sigma (q(c,\mathbf x^1),\dots ,q(c,\mathbf x^k)).$
\end{description}
\item The function $f_{c}$, defined by $f_{c}(x_{1},\dots
,x_{n})=q(c,x_{1},\dots ,x_{n})$, is a decomposition operator on $\mathbf{A}$ such that 
$f_{c}(\e_{1},\dots ,\e_{n})=c.$
\end{enumerate}
\end{theorem}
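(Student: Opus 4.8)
The plan is to establish the cycle of implications
$(1)\Rightarrow(2)\Rightarrow(3)\Rightarrow(4)\Rightarrow(5)\Rightarrow(1)$,
leaning on Theorems~\ref{thm:cong} and~\ref{prop:pairfactor} to move between factor congruences and decomposition operators.

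First I would treat the easy equivalence $(1)\Leftrightarrow(2)$. By Definition~\ref{def:ncentral}, $(1)$ says precisely that $(\theta(c,\e_1),\dots,\theta(c,\e_n))$ is a family of complementary factor congruences, so by Theorem~\ref{thm:cong} it is equivalent to the conjunction of $\bigcap_i\theta(c,\e_i)=\Delta$ and the ``patchwork'' condition~(2) of that theorem. The content here is that the patchwork condition is automatic: given $(a_1,\dots,a_n)\in A^n$, the element $u=q(c,a_1,\dots,a_n)$ satisfies $a_i\,\theta(c,\e_i)\,u$ for all $i$, because applying the substitution $\e_i\mapsto c$ and $\e_j\mapsto\e_j$ $(j\ne i)$ inside $q$ — which collapses $c$ with $\e_i$ — sends $q(c,a_1,\dots,a_n)$ to $q(\e_i,a_1,\dots,a_n)=a_i$. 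Hence $\theta(c,\e_i)$ already identifies $u$ with $a_i$, so $(2)$ of Theorem~\ref{thm:cong} holds unconditionally and only the intersection condition remains, which is exactly clause~(2) of the present theorem. The same computation shows that under $(2)$, $q(c,a_1,\dots,a_n)$ is the \emph{unique} such $u$ (uniqueness is forced by $\bigcap_i\theta(c,\e_i)=\Delta$), giving $(2)\Rightarrow(3)$; and $(3)\Rightarrow(2)$ follows since $(3)$ in particular asserts that intersecting the $\theta(c,\e_i)$ pins down a unique element, i.e.\ the intersection is $\Delta$.

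Next I would prove $(3)\Leftrightarrow(5)$. Assuming $(3)$, I verify that $f_c(x_1,\dots,x_n)=q(c,x_1,\dots,x_n)$ satisfies (D1)--(D3) of Definition~\ref{def:decomposition}: this is exactly the statement that $f_c$ is the decomposition operator associated via Theorem~\ref{prop:pairfactor} to the complementary factor congruences $\theta(c,\e_i)$, since $(3)$ says $f_c(a_1,\dots,a_n)$ is the unique $u$ with $a_i\,\theta(c,\e_i)\,u$. That $f_c(\e_1,\dots,\e_n)=q(c,\e_1,\dots,\e_n)$ is unique with $\e_i\,\theta(c,\e_i)\,c$ for all $i$ (trivially true) gives $f_c(\e_1,\dots,\e_n)=c$. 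Conversely, if $f_c$ is a decomposition operator with $f_c(\e_1,\dots,\e_n)=c$, then by Theorem~\ref{prop:pairfactor} it induces complementary factor congruences $\psi_i$ defined by $a\,\psi_i\,b$ iff $q(c,a,\dots,b,\dots,a)=a$; one checks $\psi_i=\theta(c,\e_i)$ (the inclusion $\theta(c,\e_i)\subseteq\psi_i$ uses that $\psi_i$ is a congruence collapsing $c$ and $\e_i$, which holds because $f_c(\e_1,\dots,\e_n)=c$ forces $q(c,\e_i,\dots,c,\dots,\e_i)$-type identities; the reverse inclusion uses the substitution argument above), and then Theorem~\ref{prop:pairfactor} yields the uniqueness statement of~(3).

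Finally, for $(5)\Leftrightarrow(4)$ I would unwind what (D1)--(D3) say for the specific function $f_c$: (D1) is $q(c,x,\dots,x)=x$, i.e.\ \textbf{B2}; (D3) — that $f_c\colon\mathbf A^n\to\mathbf A$ is a homomorphism — is exactly \textbf{B3}, the commutation of $q(c,-)$ with every basic operation $\sigma$; and (D2), the ``middle-exchange'' idempotency, together with $f_c(\e_1,\dots,\e_n)=c$, is equivalent to \textbf{B1} $q(c,\e_1,\dots,\e_n)=c$ modulo (D1) and (D3) (one direction is immediate; for the other, (D2) follows from B1--B3 by a diagonal argument substituting the $\e_i$'s). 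The main obstacle is this last bookkeeping step: showing that \textbf{B1} is the right ``residue'' of axiom (D2) once (D1) and (D3) are in force, and conversely deriving (D2) from \textbf{B1}--\textbf{B3}. Since the excerpt explicitly says this characterisation ``follows closely the proof for the $2$-dimensional case given in \cite[Prop.~3.6]{first}'', I would largely defer to that argument, indicating only the points where the shift from arity $3$ to arity $n+1$ changes the indexing but not the substance.
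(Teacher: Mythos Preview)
Your proposal is correct and, in fact, more explicit than the paper itself, which omits the proof entirely and simply refers the reader to the $2$-dimensional case in \cite[Prop.~3.6]{first}. Your scheme of equivalences via Theorems~\ref{thm:cong} and~\ref{prop:pairfactor} is the natural route and matches what that reference does; the one point worth making fully explicit (you call it ``bookkeeping'') is the derivation of (D2) from \textbf{B1}--\textbf{B3}: write $x_{ii}=q(\e_i,x_{i1},\dots,x_{in})$, apply \textbf{B3} with $\sigma=q$, and then use \textbf{B1} and \textbf{B2} to collapse the first column --- this is exactly the computation the paper records afterwards as identity \textbf{B4}.
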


For any  $n$-central element $c$ and any $n\times n$ matrix $\mathbf x$  of elements of $A$, a direct consequence of (B1)-(B3) gives
\begin{description}
\item[B4]
$q(c,q(c,\mathbf x_{1}),\dots, q(c,\mathbf x_{n})) =q(c, x^1_1,x^2_{2}, \dots, x^n_{n})$.
\end{description}
For any $n\times (n+1)$ matrix $\mathbf y$  of elements of $A$, another consequence of (B1)-(B3) gives 
$$ q(c,q(\mathbf y_{1}),\dots, q(\mathbf y_{n}))=q(q(c,\mathbf y^0),q(c,\mathbf y^1),\dots ,q(c,\mathbf y^n)).$$


\begin{proposition}\label{prop-closure} Let $\mathbf{A}$ be an $n\mathrm{CH}$. Then the set of all $n$-central elements of $\mathbf{A}$ is a subalgebra of the pure reduct of $\mathbf{A}$.
\end{proposition}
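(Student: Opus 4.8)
The goal is to show that the set $Z = \{c \in A : c \text{ is } n\text{-central}\}$ is closed under $q$ and contains each $\e_i$. The plan is to use the characterisation of $n$-centrality given by clause (4) of Theorem~\ref{thm:centrale}, namely conditions (B1)--(B3), since those are equational in $c$ and hence more tractable than the congruence-theoretic definition.

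First I would dispose of the constants: each $\e_i$ is $n$-central because $q(\e_i, x_1, \dots, x_n) = x_i$ makes $f_{\e_i}$ the $i$-th projection $A^n \to A$, which is trivially a decomposition operator with $f_{\e_i}(\e_1,\dots,\e_n) = \e_i$; alternatively, $\theta(\e_i,\e_i) = \Delta$, so $\bigcap_j \theta(\e_i,\e_j) = \Delta$ and clause (2) applies. Next, the heart of the matter: given $n$-central elements $c_1, \dots, c_n$, I must show $d := q(c_1, c_2, \dots, c_n)$ — wait, I need to be careful about which argument is the "selector". The pure reduct has operation $q$ of arity $n+1$, so a typical element built from central elements is $d = q(c_0, c_1, \dots, c_n)$ with all $c_0, \dots, c_n \in Z$. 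I would verify (B1)--(B3) for $d$. For (B1), compute $q(d, \e_1, \dots, \e_n)$: using that $c_0$ is central and (B3)/(B4)-type manipulations, push the outer $q(d, -)$ inside, reducing to $q(c_0, q(c_1,\e_1,\dots,\e_n), \dots)$-style expressions and then apply (B1) for each $c_i$; the target $d$ should pop out. For (B2), $q(d, x, \dots, x) = x$ should follow similarly by reducing to nested applications of (B2) for the $c_i$'s. For (B3), the homomorphism condition, I would expand $q(d, \sigma(\mathbf{x}_1), \dots, \sigma(\mathbf{x}_n))$ and repeatedly use that each $c_i$ satisfies (B3) to commute $q(c_i, -)$ past $\sigma$.

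The main obstacle I anticipate is the bookkeeping in verifying (B1) and (B3) for $d = q(c_0,c_1,\dots,c_n)$: one needs a "nesting" identity that lets an application $q(d, -)$ be rewritten in terms of the $q(c_i, -)$, and this is essentially a compatibility lemma saying that $f_d = f_{c_0} \circ (f_{c_1} \times \cdots \times f_{c_n})$ restricted appropriately, or more precisely that plugging central elements into $q$ interacts well with the decomposition operators. The clean way to organise this is: first prove the auxiliary fact that if $c$ is central and $c_1, \dots, c_n$ are arbitrary, then $q(q(c, c_1, \dots, c_n), a_1, \dots, a_n) = q(c, q(c_1, a_1, \dots, a_n), \dots, q(c_n, a_1, \dots, a_n))$ — this is a consequence of (B3) applied with $\sigma = q$ together with (B2), treating the $a_i$'s as a constant matrix of rows — and then (B1), (B2), (B3) for $d$ all follow by short substitutions into this identity combined with the hypotheses that each $c_i$ satisfies (B1), (B2), (B3). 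I would also double-check the degenerate verification that $d$ satisfies (B2) separately, as it may need (B2) for $c_0$ applied to the common value obtained from (B2) for $c_1, \dots, c_n$. Once (B1)--(B3) hold for $d$, clause (4)$\Rightarrow$(1) of Theorem~\ref{thm:centrale} gives that $d$ is $n$-central, completing the proof that $Z$ is a subalgebra of the pure reduct.
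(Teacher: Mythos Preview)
Your plan is correct and is exactly the approach the paper takes: the paper's proof is a one-line sketch that simply says ``Let $a,c_1,\dots,c_n$ be $n$-central elements. It is sufficient to prove that $q(a,c_1,\dots,c_n)$ is also an $n$-central element, i.e., it satisfies axioms (B1)--(B3),'' leaving the verification to the reader. Your auxiliary identity $q(q(c,c_1,\dots,c_n),a_1,\dots,a_n)=q(c,q(c_1,a_1,\dots,a_n),\dots,q(c_n,a_1,\dots,a_n))$ is precisely the key step (an instance of (B3) for $c$ with $\sigma=q$, using (B2) on the constant columns), and from it (B1), (B2), (B3) for $d=q(c_0,c_1,\dots,c_n)$ follow just as you outline; the only small remark is that since $q$ is merely a \emph{term} operation in the type $\tau$, one should note that (B3) extends from basic operations to all term operations by an easy induction before invoking it with $\sigma=q$.
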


Hereafter, we denote by $\mathbf{Ce}_{n}(\mathbf{A})$ the algebra 
$(\mathrm{Ce}_{n}(\mathbf{A}),q,\e_{1},\dots ,\e_{n})$ 
of all $n$-central elements of an 
$n\mathrm{CH}$ $\mathbf{A}$.

\bigskip

\begin{example}
Let $\mathbf A$ be an arbitrary algebra (not necessarily an $n\mathrm{CH}$) of  type $\tau$ and $F$ be the set of all functions from $A^n$ into $A$. Consider the $n\mathrm{CH}$  
$\mathbf F = (F,\sigma^\mathbf F,q^\mathbf F,\e_1^\mathbf F,\dots,\e_n^\mathbf F)_{\sigma\in\tau}$,
whose operations are defined as follows (for all $f_i, g_j\in F$ and all $a_1,\dots,a_n\in A$):
\begin{enumerate}
\item $\e_i^\mathbf F(a_1,\dots,a_n)=a_i$;
\item $q^\mathbf F(f,g_1\dots,g_n)(a_1,\dots,a_n) =f(g_1(a_1,\dots,a_n)\dots,g_n(a_1,\dots,a_n))$;
\item $\sigma^\mathbf F(f_1,\dots,f_k)(a_1,\dots,a_n) = \sigma^\mathbf A(f_1(a_1,\dots,a_n),\dots, f_k(a_1,\dots,a_n))$, for every $\sigma\in\tau$ of arity $k$.
\end{enumerate}
Let $\mathbf G$ be any subalgebra of $\mathbf F$ containing all constant functions.
It is possible to prove that a function $f:A^n\to A$ is an $n$-central element of $\mathbf G$ if and only if it is an $n$-ary decomposition operator of the algebra $\mathbf A$ commuting with every element $g\in G$ (for every $a_{ij}\in A$):
$$f(g(a_{11},\dots, a_{1n}),\dots, g(a_{n1},\dots, a_{nn})) = g(f(a_{11},\dots, a_{n1}),\dots, f(a_{1n},\dots, a_{nn})).$$
The reader may consult \cite{SLP18} for the case $n=2$.
\end{example}

The following example provides an application of the $n$-central elements  to lambda calculus.

\begin{example}
\label{exa:lambda} (\emph{Lambda Calculus}) We refer to Barendregt's book \cite{Bare} for basic definitions on lambda calculus. Let $\mathrm{Var}=\{a,b,c,a_1,\dots\}$ be
the set of variables of the lambda calculus ($\lambda $-variables, for short), and let $x,y,x_{1},y_{1},\dots$ be the algebraic variables (``holes'' in the terminology of Barendregt's book). Once a finite set $I=\{a_{1},\dots ,a_{n}\}$ of $\lambda $-variables has been fixed,  we define $n$ constants $\e_i$ and an operator $q_{I}$ as follows: 
$$\e_{i}= \lambda a_1\dots a_n.a_i;\qquad q_{I}(x,y_{1},\dots ,y_{n})=(\dots ((xy_{1})y_{2})\dots )y_{n};$$
The term algebra of a $\lambda$-theory 
 is an $n\mathrm{CH}$ with respect to the term operation $q_{I}$ and the constants $\e_{1},\dots ,\e_{n}$. 
  Let $\Lambda_\beta$ be the term algebra of the minimal $\lambda$-theory $\lambda\beta$, whose  lattice of  congruences is isomorphic to the lattice of $\lambda$-theories.  Let $\Omega = (\lambda a.aa)(\lambda a.aa)$ be the canonical looping $\lambda$-term. It turns out that $\Omega$   can be consistently equated to any other closed $\lambda$-term (see \cite[Proposition 15.3.9]{Bare}).
Consider the $\lambda$-theory $T_i = \theta(\Omega, \e_i)$ generated by equating $\Omega$ to the constant $\e_i$ above defined. By Theorem \ref{thm:centrale} $\Omega$ is a nontrivial central element in the term algebra of the $\lambda$-theory $T=\bigcap_{i\leq n}\theta (\Omega, \e_i)$, i.e., the quotient $\Lambda_\beta/ T$. It is possible to prove that any model of $T$, not just its term model, is decomposable. On the other hand, the set-theoretical models of $\lambda$-calculus defined after Scott's seminal work \cite{Scott} are indecomposable algebras. Hence, none of these set-theoretical models has $T$ as equational theory. This general incompleteness result has been proved in \cite{MS06} for the case $n=2$ (see also \cite{MS10}).
\end{example}

\subsection{Boolean-like algebras of finite dimension\label{trallallero}}

Boolean algebras are a prototypical example of Church algebras of finite dimension, namely of dimension $2$, in which all elements are $2$-central. It turns out that, among the $n$-dimensional Church algebras, those
algebras all of whose elements are $n$-central inherit many of the
remarkable properties that distinguish Boolean algebras. 

\begin{definition}
\label{mezzucci}An $n\mathrm{CH}$ $\mathbf{A}$ is called a \emph{Boolean-like
algebra of dimension $n$} ($n\mathrm{BA}$, for short) if every element of $A$
is $n$-central.
\end{definition}

By Proposition \ref{prop-closure}, the algebra $\mathbf{Ce}_{n}(\mathbf{A})$ of all $n$-central
elements of an $n\mathrm{CH}$ $\mathbf{A}$ is a canonical example of  $n\mathrm{BA}$. The class of all $n\mathrm{BA}$s of type $\tau $ is a variety of $n$CHs axiomatised by the identities
B1-B3 in Theorem \ref{thm:centrale}.

\begin{example}\label{exa:n}
The algebra 
$\mathbf{n}=( \{ \mathsf \e_{1},\dots,\mathsf \e_{n}\} ,q^{\mathbf{n}},\mathsf e^\mathbf{n}_{1},\dots,\mathsf e^\mathbf{n}_{n})$,
where $$q^{\mathbf{n}}( \mathsf \e_{i},x_{1},\dots,x_{n}) =x_{i}$$ for every $i\leq n$, is a pure $n\mathrm{BA}$.
\end{example}

\begin{example}\label{exa:parapa} (\emph{$n$-Partitions}) Let $X$ be a set. An \emph{$n$-partition} of $X$ is an $n$-subset $(Y_{1},\ldots ,Y_{n})$ of $X$ such that $\bigcup_{i=1}^{n}Y_{i}=X$ and $Y_{i}\cap Y_{j}=\emptyset $ for all $i\neq j$.
The set of $n$-partitions of $X$ is closed under the $q$-operator defined in 
Example \ref{exa:partition} and constitutes  the algebra of all $n$-central
elements of the Boolean vector space $\mathrm{Set}_{n}(X)$ of all $n$-subsets of $X$. 
Notice that the algebra  of $n$-partitions of $X$, denoted by $\mathrm{Par}(X)$, is isomorphic to the $n\mathrm{BA}$ $\mathbf{n}^X$.
\end{example}

\bigskip

Several remarkable properties of Boolean algebras find some analogue in the structure theory of $n\mathrm{BA}$s. 

\begin{theorem}
\label{lem:subirr} 
\begin{enumerate}
    \item An $n\mathrm{BA}$ $\mathbf{A}$ is subdirectly irreducible if and only if $|A|=n$.
    \item Any variety $\mathcal{V}$ of $n\mathrm{BA}$s is generated by the finite set $\{\mathbf{A}\in \mathcal{V}:|A|=n\}$. In particular, the variety of pure $n\mathrm{BA}$s is generated by the algebra $\mathbf{n}$.
    \item If an $n\mathrm{BA}$ $\mathbf{A}$  has a minimal subalgebra $\mathbf{E}$ of cardinality $n$, then ${V}(\mathbf{A})={V}(\mathbf{E})$.
    \item Every $n\mathrm{BA}$ $\mathbf{A}$ is isomorphic to a subdirect product of $\mathbf{B}_{1}^{I_{1}}\times \dots \times \mathbf{B}_{k}^{I_{k}}$ for some sets $I_{1},\dots ,I_{k}$ and some $n\mathrm{BA}$s $\mathbf{B}_{1},\dots ,\mathbf{B}_{k}$ of cardinality $n$.
    \item Every pure $n\mathrm{BA}$ $\mathbf{A}$ is isomorphic to a subdirect power of $\mathbf{n}^{I}$, for some set $I$.
\end{enumerate}
\end{theorem}


However, we cannot assume that any two $n$-element algebras in an arbitrary variety
$\mathcal{V}$ of $n$BAs are isomorphic, for such algebras may have further operations
over which we do not have any control.

%
%

A subalgebra of the $n$BA $\mathrm{Par}(X)$ of the $n$-partitions on a set $X$, defined in Example \ref{exa:parapa},  is called a \emph{field of $n$-partitions on $X$}. The Stone representation theorem for $n$BAs follows.

\begin{corollary} \label{cor:field-partitions}
Any pure $\mathrm{nBA}$ is isomorphic to a field of $n$-partitions on a suitable set $X$.
\end{corollary}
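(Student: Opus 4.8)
The plan is to derive Corollary~\ref{cor:field-partitions} directly from Corollary~\ref{cor:stn}(ii) together with the identification, noted in Example~\ref{exa:parapa}, of $\mathbf{\wp_n}(X)$ with $\mathbf{n}^X$. First I would recall that by Corollary~\ref{cor:stn}(ii), any pure $n\mathrm{BA}$ $\mathbf A$ embeds as a subdirect power of $\mathbf n$, i.e.\ there is a set $I$ and an injective homomorphism $\iota:\mathbf A\hookrightarrow\mathbf n^{I}$. Next I would invoke the observation in Example~\ref{exa:parapa} that the $n\mathrm{BA}$ $\mathbf{\wp_n}(X)$ of $n$-partitions of a set $X$ is isomorphic to $\mathbf n^{X}$; take $X=I$ and compose $\iota$ with this isomorphism $\mathbf n^{I}\cong\mathbf{\wp_n}(I)$ to obtain an embedding $\mathbf A\hookrightarrow\mathbf{\wp_n}(I)$. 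The image of this embedding is a subalgebra of $\mathbf{\wp_n}(I)$, which is by definition a field of $n$-partitions on $I$, so $\mathbf A$ is isomorphic to a field of $n$-partitions on $X:=I$, as required.

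The only point that deserves a line of justification is the isomorphism $\mathbf{\wp_n}(X)\cong\mathbf n^{X}$ asserted in Example~\ref{exa:parapa}; I would spell out the map explicitly. An $n$-partition $(Y_1,\dots,Y_n)$ of $X$ assigns to each $x\in X$ the unique index $i(x)$ with $x\in Y_{i(x)}$, hence a function $X\to\{\e_1,\dots,\e_n\}$, i.e.\ an element of $\mathbf n^{X}$; conversely a function $g:X\to\{\e_1,\dots,\e_n\}$ yields the $n$-partition $(g^{-1}(\e_1),\dots,g^{-1}(\e_n))$. That these are mutually inverse bijections is immediate, and one checks that the $q$-operator of Example~\ref{exa:partition}, restricted to $n$-partitions, is computed coordinatewise and therefore corresponds exactly to the coordinatewise $q$ of $\mathbf n^{X}$: indeed if the tuples $\mathbf y^{0},\dots,\mathbf y^{n}$ are $n$-partitions with associated functions $g_0,\dots,g_n$, then the $x$-th coordinate of $q(\mathbf y^0,\dots,\mathbf y^n)$ is $q^{\mathbf n}(g_0(x),g_1(x),\dots,g_n(x))$, and the designated $n$-sets $\e_1,\dots,\e_n$ of Example~\ref{exa:partition} correspond to the constant functions, which are the constants of $\mathbf n^{X}$. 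Hence the bijection is an isomorphism of pure $n\mathrm{BA}$s.

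I do not anticipate a genuine obstacle here: the corollary is essentially a repackaging of Corollary~\ref{cor:stn}(ii) in the concrete language of $n$-partitions, and the structural work has already been done in Lemma~\ref{lem:subirr}, Theorem~\ref{thm:varietynBA}, and the Birkhoff subdirect representation theorem. If anything, the mild care needed is purely bookkeeping: making sure that ``subdirect power of $\mathbf n^{I}$'' is read correctly --- Corollary~\ref{cor:stn}(ii) already delivers an embedding into a (single) power $\mathbf n^{I}$, so no further product needs to be unravelled --- and that the notion of \emph{field of $n$-partitions}, defined just before the statement as any subalgebra of some $\mathbf{\wp_n}(X)$, matches the image of the composite embedding. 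With those identifications in place the proof is a two-line composition of known isomorphisms and embeddings.
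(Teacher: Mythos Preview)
Your proposal is correct and follows exactly the route the paper intends: the corollary is stated without proof, immediately after the definition of a field of $n$-partitions, as a direct consequence of Corollary~\ref{cor:stn}(ii) and the isomorphism $\mathbf{\wp_n}(X)\cong\mathbf n^{X}$ asserted in Example~\ref{exa:parapa}. Your write-up is in fact more detailed than the paper's, which simply remarks that ``the Stone representation theorem for $n$BAs follows.''
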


\section{Applications of $n$BAs to Boolean powers}
For an algebra $\mathbf E$ of a given type and a Boolean algebra $B$, the Boolean power of $\mathbf E$ by $B$ is the algebra $\mathcal C(B^*,\mathbf E)$ of all continuous functions from the Stone space $B^*$ of $B$ to $E$, where $E$ is given the discrete topology and the operations of $E$ are lifted to $\mathcal C(B^*,\mathbf E)$ pointwise (see, e.g., \cite{BS}). Boolean powers can be traced back to the work of Foster \cite{Fa} in 1953 and turned out to be a very useful tool in universal algebra
 for exporting properties of Boolean algebras into other varieties. 

The continuous functions from $B^*$ to a discrete space $E$ determine finite partitions of $B^*$ by clopen sets. By Corollary \ref{cor:field-partitions} such partitions can be canonically endowed with a structure of $n$BA. Thus it is natural to algebraically rephrase   the theory of Boolean powers by using central elements and, by the way, to generalise Boolean powers to arbitrary semiring powers. 
We define the semiring power $\mathbf{E}[R] $ of an arbitrary algebra $\mathbf{E}$ by a semiring $R$ as an algebra whose universe is the set of  central elements of a certain semimodule; the operations of $\mathbf E$ are extended by linearity.

We obtain the following results:
\begin{enumerate}
\item[(i)] The central elements of an $R$-semimodule are characterised by those finite sets of elements of $R$ that are fully orthogonal, idempotent and commuting (Theorem \ref{prop:semimodulecentral}).
\item[(ii)] If $R$ is a Boolean algebra, then the algebraically defined semiring power $\mathbf{E}[R]$ is isomorphic to the Boolean power $\mathcal C(R^*, \mathbf E)$ (Theorem \ref{prop:Ebp}).
\item[(iii)] For every semiring $R$, the semiring power $\mathbf{E}[R]$ is isomorphic to the Boolean power $\mathcal C(C(R)^*,\mathbf E)$ of $\mathbf E$ by the Boolean algebra $C(R)$ of complemented and commuting elements of $R$  (Theorem \ref{thm:semiboolean}).
\end{enumerate}

%
%

Let $\mathbf{E}$ be an algebra of
type $\tau $ and $R$ be a (non necessarily commutative) semiring. Consider the $R$-semimodule $V$ freely generated by the set $E$.  A vector of $V$ is a linear combination $\mathbf{v}=\sum_{\e\in
E}v_{\e}\e$, where $v_{\e}\in R$ is a scalar and $v_{\e}=0$ for all
but finitely many $\e\in E$. Let us observe that $\sum_{\e\in
E}v_{\e}\e$\emph{ always converges}, since it involves only finitely many $e\in E$.\\

Every operation 
$g^\mathbf E$ 
of the algebra $\mathbf E$ can be linearly lifted to an operation $g^V$ on $V$:
\[
g^{V}(\mathbf{v}^{1},\dots ,\mathbf{v}^{k})=\sum_{\dd_{1},\dots,\dd_{k}\in E}(v_{\dd_{1}}^{1}\cdot \ldots \cdot
v_{\dd_{k}}^{k})g^{\mathbf{E}}(\dd_{1},\dots ,\dd_{k})
\]
 For every finite nonempty $I=\{\e_{1},\dots ,\e_{n}\}\subseteq E$, we define (cf. Example \ref{cocchio}):
 $$q^V_I(\mathbf{v},\mathbf{w}^{1},\dots,\mathbf{w}^{n}) = \sum_{i=1}^n
v_{\e_i}\mathbf{w}^{i} = \sum_{\dd\in E} (\sum_{i=1}^n v_{\e_i}  w^i_\dd)\dd, $$
and we call a vector $\mathbf{v}\in V$ \emph{$I$-central} if it is an $n$-central element with respect to
the operation $q_{I}$ in the algebra 
$$V_{\mathbf{E}} = (V, +^{V}, f^{V}_r,g^{V}, q^V_J,\e^V)_{r\in R,g\in \tau,J\subseteq_{\mathrm{fin}}E,\e\in E},$$ 
where $f^{V}_r$ is the unary operation defined by $f^V_r(\mathbf{v})=r\mathbf{v}$.\\

The next Theorem shows that a vector is  $I$-central in $V_{\mathbf{E}}$ if and only if it can be written uniquely as an $R$-linear combination of fully orthogonal and commuting idempotents. 

\begin{theorem}
\label{prop:semimodulecentral} Let $I=\{\e_{1},\dots ,\e_{n}\}\subseteq E$. A vector $\mathbf{a}=\sum_{\dd\in
E}a_{\dd}\dd \in V$ is $I$-central in the
algebra $V_{\mathbf{E}}$ iff the following conditions are satisfied:

\begin{itemize}
\item[(i)] $a_\dd= 0$ for all $\dd\in E\setminus I$;

\item[(ii)] $a_{\e_1}+\dots+a_{\e_n}=1$;

\item[(iii)] $a_{\e_i} x = x a_{\e_i}$, for all $x\in R$ and $\e_i\in I$;

\item[(iv)] $a_{\e_{i}} a_{\e_{j}}=
\begin{cases}
0 & \text{if }i\neq j \\ 
a_{\e_{i}} & \text{if }i=j
\end{cases}
\quad$ for all $\e_{i},\e_{j}\in I$.
\end{itemize}
\end{theorem}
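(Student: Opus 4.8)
The plan is to use the characterisation of $n$-central elements given by condition (4) of Theorem \ref{thm:centrale}, i.e.\ to show that a vector $\mathbf a$ satisfies (B1)--(B3) with respect to $q_I^V$ if and only if conditions (i)--(iv) hold. I would argue both directions by unwinding what (B1)--(B3) say coordinate-wise, using the explicit formula
$$q^V_I(\mathbf{v},\mathbf{w}^{1},\dots,\mathbf{w}^{n}) = \sum_{\dd\in E} \Bigl(\sum_{i=1}^n v_{\e_i} w^i_\dd\Bigr)\dd.$$
The first observation is that, by freeness of the basis $E$, two vectors are equal iff all their coordinates agree, so every identity to be checked reduces to a family of scalar identities in $R$; this is the workhorse used throughout.

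For the forward direction, suppose $\mathbf a$ is $I$-central. Applying (B2), $q_I^V(\mathbf a,\mathbf v,\dots,\mathbf v)=\mathbf v$ for all $\mathbf v$; reading the $\dd$-coordinate gives $(\sum_{i=1}^n a_{\e_i})v_\dd = v_\dd$ for all $\dd$ and all choices of $v_\dd\in R$, which (taking $v_\dd=1$) yields (ii). To get (i) I would use (B1): $q_I^V(\mathbf a,\e_1,\dots,\e_n)=\mathbf a$, and compute the left-hand side — its $\dd$-coordinate is $\sum_{i=1}^n a_{\e_i}(\e_i)_\dd$, which is $a_{\e_j}$ if $\dd=\e_j\in I$ and $0$ if $\dd\notin I$; equating with $a_\dd$ forces $a_\dd=0$ off $I$. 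Condition (iv) comes from (B4) (the consequence of (B1)--(B3) recorded after Theorem \ref{thm:centrale}) applied with a suitable matrix of basis vectors, or directly from the decomposition-operator identity (D2) for $f_{\mathbf a}$: evaluating $q_I^V(\mathbf a, q_I^V(\mathbf a,\mathbf x_1),\dots,q_I^V(\mathbf a,\mathbf x_n))=q_I^V(\mathbf a,x_1^1,\dots,x_n^n)$ on basis-vector entries and reading coordinates gives the orthogonality/idempotency relations among the $a_{\e_i}$. Finally (iii), the commutativity of each $a_{\e_i}$ with arbitrary scalars, is exactly what (B3) demands: taking $\sigma$ to be the unary scalar-multiplication-by-$x$ operation $g_x^V(\mathbf v)=x\mathbf v$ (which is one of the lifted operations $g^V$ present in the type of $V_{\mathbf E}$), the identity $q_I^V(\mathbf a, x\mathbf x_1,\dots,x\mathbf x_n)=x\,q_I^V(\mathbf a,\mathbf x_1,\dots,\mathbf x_n)$ read in coordinates gives $a_{\e_i}x = x a_{\e_i}$ after choosing the $\mathbf x_i$ to be appropriate basis vectors.

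For the converse, assume (i)--(iv); I would verify (B1)--(B3) by the same coordinate computation, now running the equalities in the other direction: (i) kills all the off-$I$ terms so that $q_I^V$ effectively only sees the scalars $a_{\e_1},\dots,a_{\e_n}$; (ii) gives (B2); (i) together with (ii) gives (B1); (iii) lets the scalars $a_{\e_i}$ slide past any coefficient coming from a lifted operation $g^V$ (each such operation is $R$-multilinear, built from $+$ and left scalar multiplication, so commuting with the $a_{\e_i}$ suffices to push $q_I^V$ through it), yielding (B3); and (iv) is what makes $f_{\mathbf a}=q_I^V(\mathbf a,-,\dots,-)$ idempotent in the sense of (D2), which, combined with (B1)--(B3), is subsumed by the equivalence in Theorem \ref{thm:centrale}. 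Then $\mathbf a$ is $I$-central by condition (4) $\Rightarrow$ (1) of that theorem.

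I expect the main obstacle to be bookkeeping rather than conceptual: one must be careful that $R$ is not assumed commutative, so the order of scalars matters everywhere, and condition (iii) is precisely the hypothesis that repairs this — it has to be invoked at exactly the right moments (when a lifted operation $g^V$ is pulled through $q_I^V$, and when verifying (iv) interacts with (B3)). The other delicate point is making sure that the identities (B1)--(B3), which quantify over \emph{all} operations $\sigma$ in the type of $V_{\mathbf E}$ — including all the $q_J^V$ for other finite $J\subseteq E$ and all lifted $g^V$ — are all covered; in fact (B3) for the operations $q_J^V$ follows formally once (B3) for scalar multiplications and $+$ is known, because $q_J^V$ is itself built from these by $R$-multilinearity, so no genuinely new computation is needed there. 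Choosing the test matrices $\mathbf x$ to have basis vectors (or simple scalar multiples thereof) as entries keeps every scalar identity that arises short and transparent.
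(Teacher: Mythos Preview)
Your overall strategy matches the paper's: both directions proceed by checking (B1)--(B3) coordinate-wise. But there is a genuine gap in your backward direction. You claim that (iii) alone suffices to push $q_I^V$ through a lifted multilinear operation $g^V$, and that (B3) for $q_J^V$ follows once (B3) for $+$ and scalar multiplication is known. Neither is right: condition (iv) is essential for (B3) itself, not merely for (D2). For binary $g$ one has, after using (iii) to commute scalars,
\[
g^V\bigl(q_I(\mathbf a,\mathbf t^1,\dots,\mathbf t^n),\,q_I(\mathbf a,\mathbf w^1,\dots,\mathbf w^n)\bigr)=\sum_{i,j} a_{\e_i}a_{\e_j}\,g^V(\mathbf t^i,\mathbf w^j),
\]
and it is precisely the orthogonality/idempotency $a_{\e_i}a_{\e_j}=\delta_{ij}a_{\e_i}$ that collapses this to $\sum_i a_{\e_i}g^V(\mathbf t^i,\mathbf w^i)=q_I(\mathbf a,g^V(\mathbf t^1,\mathbf w^1),\dots,g^V(\mathbf t^n,\mathbf w^n))$. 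The same obstacle appears for $q_J^V$: its first argument contributes a \emph{variable} scalar, so $q_J^V$ is not a composite of $+$ and fixed-scalar multiplications, and (B3) for it does not reduce formally to (B3) for those; the paper's computation uses (iii) and (iv) in tandem. Note also that once (B1)--(B3) are established, Theorem~\ref{thm:centrale}(4)$\Rightarrow$(1) already gives centrality --- there is no separate (D2) to verify.

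A minor terminological slip: scalar multiplication by $r\in R$ is a semimodule operation on $V$, not one of the lifted $g^V$ coming from $\tau$; the paper treats the semimodule operations as part of the type when checking and exploiting (B3).
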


\begin{proof} 
  ($\Leftarrow$) We check identities (B1)-(B3) of Theorem \ref{thm:centrale}  for an element $\mathbf a\in V$ satisfying  hypotheses (i)-(iv). Since no danger of confusion will be impending, throughout this proof we write $a_i$ for $a_{\e_i}$ ($\e_i\in I$). 

 B1: $q_I(\mathbf a, \e_1,\ldots, \e_n)=a_{1} \e_1+\dots+a_{n} \e_n = \mathbf a$, by (i).

 B2: $q_I(\mathbf a,\mathbf b,\ldots,\mathbf b)=a_1\mathbf b+\dots+a_n\mathbf b =(\sum_{j=1}^{n} a_j) \mathbf b= 1\mathbf b$, by (ii), and $1\mathbf b=\mathbf b$.

 B3: 

\begin{align*}
q_I(\mathbf a,\mathbf w^1+\mathbf v^1,\ldots,\mathbf w^n+\mathbf v^n)&= \sum_{i=1}^{n} a_i(\mathbf w^i+\mathbf v^i)\\
&=\sum_{i=1}^{n} a_i\mathbf w^i + \sum_{i=1}^{n} a_i\mathbf v^i\\
&=q_I(\mathbf a,\mathbf w^1,\ldots,\mathbf w^n)+q_I(\mathbf a,\mathbf v^1,\ldots,\mathbf v^n)
\end{align*}

$$
q_I(\mathbf a,r\mathbf w^1,\ldots,r\mathbf w^n)    =   \sum_{i=1}^{n} a_i(r\mathbf w^i) 
    =  \sum_{i=1}^{n} (a_i r)\mathbf w^i 
    =\sum_{i=1}^{n} (r a_i)\mathbf w^i$$
by (iii), and 
$\sum_{i=1}^{n} (r a_i)\mathbf w^i=rq_I(\mathbf a,\mathbf w^1,\ldots,\mathbf w^n)$.

Without loss of generality, we assume $g\in \tau$ to be a binary operator:

\begin{align*}
g^V(q_I(\mathbf a,\mathbf t^1,\dots,\mathbf t^n),q_I(\mathbf a,\mathbf w^1,\dots,\mathbf w^n)) &= g^V(a_1\mathbf t^1+\dots+a_n\mathbf t^n,a_1\mathbf w^1+\dots+a_n\mathbf w^n )\\
&=\sum_{i=1}^n \sum_{j=1}^n (a_i a_j) g^V(\mathbf t^i,\mathbf w^j)\text{ (linearity)}\\
&=\sum_{i=1}^n a_i g^V(\mathbf t^i,\mathbf w^i)\text{ (iv)}\\
&=q_I(\mathbf a,g^V(\mathbf t^1,\mathbf w^1),\dots, g^V(\mathbf t^n,\mathbf w^n)).\\
\end{align*}

 Let $\mathbf c^1,\dots,\mathbf c^n\in V$, $J=\{\dd_1,\dots,\dd_k\}\subseteq E$, and $ Y$ be a $k\times n$ matrix of vectors of $V$. We denote by $Y^j$ the $j$-th column of $Y$ and by $Y_i$ its $i$-th row.
If $1\leq s\leq n$ then we have:
 $q_J(\mathbf c^s, Y^s) =q_J(\mathbf c^s, Y^s_1,\dots, Y^s_k) = \sum_{l=1}^k    c^s_{\dd_l} Y^s_l$
 and 
 \begin{equation}\label{nino1}q_I(\mathbf a,q_J(\mathbf c^1, Y^1),\ldots,q_J(\mathbf c^{n}, Y^n))=
 \sum_{s=1}^{n} a_s q_J(\mathbf c^s, Y^s)= \sum_{l=1}^{k}\sum_{s=1}^{n} (a_s  c^s_{\dd_l})  Y^s_l.\end{equation}

As
\begin{align*}
q_I(\mathbf a,\mathbf c^1,\ldots,\mathbf c^{n})&=\sum_{j=1}^{n} a_j \mathbf c^j\\
&= \sum_{j=1}^{n} a_j (\sum_{\ch\in E}c^j_\ch \ch)\\
&=\sum_{\ch\in E} (\sum_{j=1}^{n} a_jc^j_\ch)\ch
\end{align*}

and $J=\{\dd_1,\dots,\dd_k\}$, we get the conclusion by applying (iii)-(iv) and (\ref{nino1}):

\begin{align*}
q_J(q_I(\mathbf a,\mathbf c^1,\ldots,\mathbf c^{n}),q_I(\mathbf a, Y_1),\ldots,q_I(\mathbf a, Y_k)) &=\sum_{l=1}^{k} q_I(\mathbf a,\mathbf c^1,\ldots,\mathbf c^{n})_{\dd_l}  q_I(\mathbf a, Y_l)\\ 
&=\sum_{l=1}^{k}( [\sum_{j=1}^{n} a_j c^j_{\dd_l} ][\sum_{s=1}^{n} a_s  Y_l^s])\\ 
&=\sum_{l=1}^{k}(\sum_{s=1}^{n}\sum_{j=1}^{n} a_j c^j_{\dd_l}  a_s)  Y_l^s\\
&=\sum_{l=1}^{k}\sum_{s=1}^{n}(a_s c^s_{\dd_l})  Y_l^s\\
&=q_I(\mathbf a,q_J(\mathbf c^1, Y^1),\ldots,q_J(\mathbf c^{n}, Y^n)).
\end{align*}

($\Rightarrow$) Since $\mathbf a$ is $I$-central, then by (B1) we have that $\mathbf a=q(\mathbf a, \e_1,\ldots, \e_n)=\sum_{i=1}^n a_i \e_i$. This implies (i). Item (ii) follows by (B2), because
$1\e_1=\e_1=q_I(\mathbf a,\e_1,\ldots,\e_1)=(\sum_{i=1}^n a_i)\e_1$ and each vector in $V$
can be expressed as a linear combination of elements in $E$ in exactly one way.

To show (iii) we consider the following chain of equalities:

\begin{align*}
(r a_1) \e_1 &=r(a_1 \e_1)\\
&=rq_I(\mathbf a, \e_1,\mathbf 0,\ldots,\mathbf 0)\\
&= q_I(\mathbf a,r \e_1,r\mathbf 0\ldots,r\mathbf 0)\text{ (B3)}\\
&= a_1(r \e_1)\\
&= (a_1r) \e_1
\end{align*}
It follows that $ra_1 = a_1r$
for every $r\in R$. Similarly for the other coordinates of $\mathbf a$.

From (B3)
\[
q_I(\mathbf a,q_I(\mathbf c^1, Y^1),\ldots,q_I(\mathbf c^{n}, Y^n))   = 
  q_I(q_I(\mathbf a,\mathbf c^1,\ldots,\mathbf c^{n}),q_I(\mathbf a, Y_1),\ldots,q_I(\mathbf a, Y_n)).
\]
It follows that 
$$\sum_{s=1}^{n} a_s\left(\sum_{l=1}^{n} c^s_l  Y^s_l\right)=
\sum_{l=1}^{n}\sum_{s=1}^{n} (a_s  c^s_l)  Y^s_l
=_{\mathrm{B3}} \sum_{l=1}^{n}\sum_{s=1}^{n}(\sum_{j=1}^{n} a_j c^j_l  a_s)  Y_l^s
.$$
Fix $\e\in I$, a row $l$ and column $s$. Let  $ Y^s_l = \mathbf e$ and $ Y^i_r = \mathbf 0$ for all $(i,r)\neq (s,l)$. Then
$$a_s  c^s_l = \sum_{j=1}^{n} a_j c^j_l  a_s.$$
We get $a_s =a_s a_s$ of item (iv) by putting $c^s_l = 1$ and $c^j_l =  0$ for all $j\neq s$.
The last condition $a_j  a_s=0$ ($j\neq s$) is obtained by defining $c^s_l = 0$, $c^j_l =  1$ and $c^i_l =  0$ for all $i\neq j$.
\end{proof}

\begin{definition}
A vector $\mathbf{v}\in V$ is called \emph{finitely central} if $\mathbf{v}
$ is $I$-central for some nonempty finite subset $I$ of $E$. We denote by $E[R]$ the set of all finitely central elements of $V_\mathbf{E}$.
\end{definition}


\begin{lemma}
\label{lem:Ebp}

\begin{itemize}
\item[(i)] The set $E[R]$ is a subuniverse of the algebra $(V,g^{V})_{g\in \tau }$.

\item[(ii)] If $E$ has finite cardinality $n$, then $E[R]$ is  closed under the
operation $q^V_{E}$ and the algebra $(E[R],g^{V},q^V_{E},\e^V)_{\e\in E,g\in \tau }$ is an $n\mathrm{BA}$.
\end{itemize}
\end{lemma}

\begin{proof}
(i) Without loss of generality, assume $g\in \tau $ to be a binary operator.
Let $\mathbf{v}$ be $I$-central and $\mathbf{t}$ be $J$-central (for $I,J$
finite subsets of $E$). Then we set: 
\begin{equation}
\mathbf{w}=g^{V}(\mathbf{v},\mathbf{t})=\sum_{\dd,\e\in E}(v_{\dd} t_{\e})g^{\mathbf{E}}(\dd,\e).  \label{eq:g}
\end{equation}
We show that $\mathbf{w}$ is $H$-central, where $H=\{g^{\mathbf{E}}(\dd,\e):\dd\in I,\e\in J\}$. Let 
\[
H_{\cc}=\{(\dd,\e)\in I\times J:g^{\mathbf{E}%
}(\dd,\e)=\cc\}.
\]
Then, by (\ref{eq:g}) $w_{\cc}=\sum_{(\dd,\e)\in H_{\cc}}v_{\dd}t_{\e}$, is the $\cc$-coordinate of $\mathbf{w}$. The conclusion follows from
Theorem \ref{prop:semimodulecentral} by verifying that $w_{\cc}=0$ for $\cc\notin H$, $\sum_{\e\in H}w_{\e}=1$, $w_{\dd} w_{\dd}=w_{\dd}$, $w_{\dd} w_{\e}=0$ for $\dd\neq \e$, and for all $x\in R$, $w_{\e}x=x w_{\e}$. 

(ii) If an element $\mathbf{v}$ is $I$-central, then by Theorem \ref{prop:semimodulecentral} $\mathbf{v}$ is also $J$-central for every finite $J$ such that $I\subseteq J\subseteq E$. Then the
conclusion follows because $E$ is finite and all elements of $E[R]$ are $E$-central, so that $E[R]$ is the set of all $E
$-central elements of the algebra $(V,g^{V},q_{E},\e^{V})_{g\in \tau ,\e\in E}$.
\end{proof}

Let us now introduce the following relevant notion:
\begin{definition}
The algebra $\mathbf{E}[R]=(E[R],g^{\mathbf{E}[R]})_{g\in \tau} $, called the \emph{semiring power} of $\mathbf{E}$ by $R$, is the algebra of finitely central elements of the algebra $V_\mathbf E$.
\end{definition}


If $R$ is a Boolean algebra and $\mathbf E$ an arbitrary algebra, let $\mathcal C(R^*,\mathbf E)$ be the set of continuous functions from the Stone space $R^*$ to $E$, giving $E$ the discrete topology.
$\mathcal C(R^*,\mathbf E)$ is a subuniverse of the algebra $\mathbf E^{R^*}$, and it is called the \emph{Boolean power} of $\mathbf E$ by $R$ \cite{BS}.

\begin{theorem}
\label{prop:Ebp}
 If $R$ is a Boolean algebra, then $\mathbf{E}[R]$ is isomorphic to
 the Boolean power $\mathcal C(R^*,\mathbf E)$.
\end{theorem}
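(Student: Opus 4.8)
The plan is to build an explicit isomorphism between the semiring power $\mathbf{E}[R]$ and the Boolean power $\mathcal{C}(R^*,\mathbf{E})$ by using the Stone duality correspondence between elements of the Boolean algebra $R$ and clopen subsets of $R^*$. First I would recall that, since $R$ is a Boolean algebra, multiplication is meet, addition corresponds to symmetric difference, and a finite family $a_{\e_1},\dots,a_{\e_n}\in R$ is fully orthogonal, idempotent and commuting (every element of a Boolean algebra is idempotent and central) precisely when the corresponding clopen sets $\widehat{a_{\e_1}},\dots,\widehat{a_{\e_n}}\subseteq R^*$ form a partition of $R^*$ into (possibly empty) clopen pieces. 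By Theorem~\ref{prop:semimodulecentral}, a finitely central vector $\mathbf{a}=\sum_{i=1}^n a_{\e_i}\e_i\in E[R]$ (with $I=\{\e_1,\dots,\e_n\}$) thus encodes exactly such a clopen partition of $R^*$ together with a labelling of the $i$-th block by $\e_i\in E$. This is precisely the data of a continuous function $R^*\to E$: define $\Phi(\mathbf{a}):R^*\to E$ by $\Phi(\mathbf{a})(\mathfrak{p})=\e_i$ whenever $\mathfrak{p}\in\widehat{a_{\e_i}}$. Continuity is immediate because each fibre $\Phi(\mathbf{a})^{-1}(\e_i)=\widehat{a_{\e_i}}$ is clopen and $E$ carries the discrete topology.

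Next I would check that $\Phi$ is well defined and bijective. Well-definedness requires that if the same vector is written using a larger index set $J\supseteq I$, the associated function is unchanged; this follows because enlarging $I$ just adds blocks $\widehat{a_{\dd}}=\emptyset$ for $\dd\in J\setminus I$, which do not affect the labelling. For surjectivity, given a continuous $h:R^*\to E$, compactness of $R^*$ forces the image of $h$ to be a finite set $\{\e_1,\dots,\e_n\}\subseteq E$, and the fibres $h^{-1}(\e_i)$ are clopen, hence of the form $\widehat{a_{\e_i}}$ for unique $a_{\e_i}\in R$; the vector $\sum_i a_{\e_i}\e_i$ then satisfies (i)--(iv) of Theorem~\ref{prop:semimodulecentral} and maps to $h$ under $\Phi$. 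Injectivity follows because the fibres of $\Phi(\mathbf{a})$ determine the coordinates $a_{\e_i}$ via Stone duality, and coordinates outside $I$ vanish by condition (i).

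It then remains to verify that $\Phi$ is a homomorphism of $\tau$-algebras, i.e.\ that for every $g\in\tau$ (say binary, without loss of generality) and every pair $\mathbf{v},\mathbf{t}\in E[R]$ we have $\Phi(g^{\mathbf{E}[R]}(\mathbf{v},\mathbf{t}))=g^{\mathcal{C}(R^*,\mathbf{E})}(\Phi(\mathbf{v}),\Phi(\mathbf{t}))$, where the right-hand side is the pointwise operation. Using the formula $g^V(\mathbf{v},\mathbf{t})=\sum_{\dd,\e}(v_{\dd}t_{\e})g^{\mathbf{E}}(\dd,\e)$ from the proof of Lemma~\ref{lem:Ebp}, the $\cc$-coordinate of $g^V(\mathbf{v},\mathbf{t})$ is $\sum_{(\dd,\e):g^{\mathbf{E}}(\dd,\e)=\cc}v_{\dd}t_{\e}$, whose associated clopen set is $\bigcup_{g^{\mathbf{E}}(\dd,\e)=\cc}(\widehat{v_{\dd}}\cap\widehat{t_{\e}})$ (the union is disjoint since the $\widehat{v_{\dd}}$ partition $R^*$, likewise the $\widehat{t_{\e}}$); a point $\mathfrak{p}$ lies in this set iff $\Phi(\mathbf{v})(\mathfrak{p})=\dd$ and $\Phi(\mathbf{t})(\mathfrak{p})=\e$ with $g^{\mathbf{E}}(\dd,\e)=\cc$, i.e.\ iff $g^{\mathbf{E}}(\Phi(\mathbf{v})(\mathfrak{p}),\Phi(\mathbf{t})(\mathfrak{p}))=\cc$. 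Hence $\Phi(g^V(\mathbf{v},\mathbf{t}))(\mathfrak{p})=g^{\mathbf{E}}(\Phi(\mathbf{v})(\mathfrak{p}),\Phi(\mathbf{t})(\mathfrak{p}))$ for all $\mathfrak{p}$, which is exactly the pointwise operation; the general $k$-ary case is identical with heavier indexing. I expect the main obstacle to be purely bookkeeping: keeping the Stone-dual translation of the arithmetic of coordinates (products become intersections, the orthogonality ensuring unions stay disjoint) aligned across an operation of arbitrary arity, and making sure the "enlarge the index set" manoeuvre is handled cleanly so that the map is genuinely well defined on $E[R]$ rather than on a choice of representation.
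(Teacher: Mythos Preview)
Your proposal is correct and follows essentially the same approach as the paper: define the map sending a finitely central vector $\mathbf{v}=\sum_i v_{\e_i}\e_i$ to the function $R^*\to E$ that assigns to an ultrafilter $F$ the unique $\e_i$ with $v_{\e_i}\in F$, then verify it is an isomorphism. The paper's own proof is much terser---it defines the map and simply asserts it is an isomorphism---so you have in fact filled in the details the paper omits; one small slip is that in the paper's setup the Boolean algebra is treated as the bounded-distributive-lattice semiring $(R,\vee,\wedge,0,1)$, so addition is join rather than symmetric difference, but this does not affect your argument since the orthogonality conditions of Theorem~\ref{prop:semimodulecentral} translate to a clopen partition of $R^*$ either way.
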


\begin{proof} Let $\mathbf{v} = \sum_{\e_i\in I} v_{\e_i} \e_i\in E[R]$
for some $I\subseteq_{\mathrm{fin}}E$.  Given an ultrafilter $F\in R^*$, by Theorem \ref{prop:semimodulecentral} there exists exactly one $\e_i\in I$ such that  $v_{\e_i}\in F$.
Then we define $f_\mathbf v: R^*\to E$ as follows, for every $F\in R^*$:
$$f_\mathbf v(F)=\ \text{the unique $\e_i\in I$ such that $v_{\e_i}\in F$}.$$
 The map $\mathbf v \mapsto f_\mathbf v$ is an isomorphism from $\mathbf{E}[R]$ to $\mathcal C(R^*,\mathbf E)$.
\end{proof}

In the remaining part of this section we prove that every semiring power is isomorphic to a Boolean power.

Let $R$ be a semiring. An element $r\in R$ is said to be
\begin{enumerate}
    \item[(i)] \emph{complemented} if there exists an element $s\in R$ such that $ r+s=1$ and  $rs=0$;
    \item[(ii)] \emph{commuting} if, for all $t\in R$, $rt=tr$.
\end{enumerate}
The complement of an element $r$ is unique, and will be denoted by $r'$. 
Indeed, if $s$ is another complement of $r$ then we have: $r' =r'(r+s)=r's$ and $s=(r+r')s= r's$.

We will write $C(R)$ for the set of complemented and commuting elements of $R$.
Every element of  $\mathrm{C}(R)$ is idempotent, because $r=r(r+r')=r^2+rr'=r^2$.

The following lemma shows that $C(R)$ is in fact a Boolean algebra with respect to the operations $r\lor s = r+r's$, $r\land s=rs$ and the above defined complementation.

\begin{lemma}
$C(R)$ is a Boolean algebra.
\end{lemma}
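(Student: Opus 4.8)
The plan is to verify the Boolean algebra axioms for $C(R)$ under the proposed operations $r \lor s = r + r's$, $r \land s = rs$, with the complementation $r \mapsto r'$ already shown to be well-defined, and with bottom and top elements $0$ and $1$. First I would check that $C(R)$ is closed under these operations. For $\land = \cdot$: if $r, s \in C(R)$ with complements $r', s'$, then $rs$ commutes with everything (composition of commuting elements), and $rs$ is complemented with complement $r' \lor s' = r' + r's'$; here one uses idempotency (every element of $C(R)$ is idempotent, as noted) and commutativity freely, so the verification $rs \cdot (r'+r's') = 0$ and $rs + r's + rr's' \cdot(\dots) = 1$ is a routine semiring computation. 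Closure under $\lor$ then follows either directly or via the De Morgan-style definition $r \lor s = (r' \land s')'$, which I would actually take as the cleanest route once $\land$ and $'$ are known to be internal operations. Closure under $'$ is immediate since $(r')' = r$ and $r'$ commutes with everything whenever $r$ does (from $rt = tr$ for all $t$ one gets $r't = tr'$ by multiplying the complement relation through).

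Next I would establish that $(C(R), \land, 1)$ is a commutative idempotent monoid, i.e.\ a meet-semilattice with top $1$: associativity and the unit law come from $(R, \cdot, 1)$ being a monoid, commutativity holds because both factors are commuting elements, and idempotency was already recorded. Dually, $(C(R), \lor, 0)$ is a commutative idempotent monoid with unit $0$; commutativity of $\lor$ uses $r + r's = s + s'r$, which reduces to $rs' + r's = $ \dots and is checked by the usual manipulation $r + r's = r\cdot 1 + r's = r(s+s') + r's = rs + rs' + r's$, symmetric in $r,s$. Then I would verify the absorption laws $r \land (r \lor s) = r$ and $r \lor (r \land s) = r$: the first is $r(r + r's) = r^2 + rr's = r + 0 = r$; the second is $r + r'(rs) = r + (r'r)s = r + 0 = r$. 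These four facts (two semilattices plus absorption) give that $C(R)$ is a lattice whose order is $r \leq s \iff rs = r$, with $0$ and $1$ as least and greatest.

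Then I would prove distributivity, say $r \land (s \lor t) = (r \land s) \lor (r \land t)$, i.e.\ $r(s + s't) = rs + (rs)'(rt)$; the left side is $rs + rs't$, while on the right $(rs)' = r' \lor s' = r' + r's'$, so $(rs)'(rt) = (r' + r's')rt = r'rt + r's'rt = 0 + rs't = rs't$ using idempotency and commutativity, matching the left side. Finally, complementation: $r \land r' = rr' = 0$ and $r \lor r' = r + r'r' = r + r' = 1$ since $r'$ is idempotent, so $r'$ is the lattice complement of $r$. Together these show $C(R)$ is a bounded distributive complemented lattice, hence a Boolean algebra. The main obstacle — really the only point demanding care rather than bookkeeping — is the closure verification: establishing that $C(R)$ is genuinely closed under $\land$ (equivalently $\lor$), since this is where one must simultaneously produce a complement for the product and check it commutes, rather than merely manipulating elements already known to lie in $C(R)$; everything downstream is a short semiring calculation exploiting idempotency and commutativity of the elements involved.
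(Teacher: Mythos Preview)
Your plan is sound and essentially the same as the paper's: both verify the Boolean algebra axioms by direct semiring calculation, the paper leading with $\lor$ and you with $\land$. Two slips are worth flagging.

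First, by the definition $a\lor b=a+a'b$ one has $r'\lor s'=r'+rs'$, not $r'+r's'$. This typo propagates into your distributivity check: with your formula, $(r'+r's')rt=r'rt+r's'rt=0+0=0$ (since $r'r=0$), not $rs't$, so that computation fails as written. With the correct $(rs)'=r'+rs'$ everything goes through: $(r'+rs')rt=0+r^2s't=rs't$.

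Second, the claim that $r'$ commutes ``by multiplying the complement relation through'' does not work in a semiring lacking additive cancellation: from $rt+r't=t=tr+tr'$ and $rt=tr$ you cannot cancel to get $r't=tr'$. The argument the paper uses is $r't=r't(r+r')=r'tr+r'tr'=r'rt+r'tr'=0+r'tr'=r'tr'$, and symmetrically $tr'=r'tr'$.
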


\begin{proof} 

 First, we show that $C(R)$ is closed under complementation. If $r\in \mathrm{C}(R)$, then we prove that $r'$ is commuting: $r't=r't(r+r')=r'tr+r'tr'=r'rt+r'tr'=0+r'tr'=r'tr'$. By symmetry we also have $tr'=r'tr'$.
 
If $r,s\in \mathrm{C}(R)$, then we prove that $r's'$ is the complement of $r\lor s$: $(r\lor s)+r's' = r+r's + r's' = r+r'(s+s')=r+r'=1$ and $(r\lor s)r's' =(r+r's)r's'= rr's' +r'sr's'=0+0=0$, by commutativity. 
Moreover, it is not difficult to check that $r\lor s$ is commuting.

As regards idempotency, $r\lor r = r+r'r = r+0=r$.

For De Morgan Laws, $(r\lor s)'= r's'$ and $(rs)'=r'+rs'$, since $(r+r's)r's' = rr's'+r'sr's' =0+r'^2ss'=0$ and $r+r's+r's' = r+r'(s+s')=r+r'=1$.

Concerning commutativity, $r\lor s= r+r's = (s+s')(r+r's)= sr+s'r+sr's+s'r's=sr+s'r+sr's= s(r+r')+s'r=s+s'r=s\lor r$.

Finally, $(r\lor s)\lor t= r+r's+(r\lor s)'t= r+r's+r's't= r+r'(s+s't)= r\lor(s\lor t)$, i.e. associativity holds true.

We leave to the reader the verification of the other laws.
\end{proof}

The following lemma presents two useful properties of $C(R)$.
\begin{lemma}\label{lem:somma} Let $a_1,\dots,a_n\in C(R)$.
\begin{itemize}
\item[(a)]  $(a_1+\dots+a_n)(a_1\lor\dots\lor a_n)=a_1+\dots+a_n$.
\item[(b)] If $a_ia_j=0$ ($i\neq j$), then $a_1+\dots+a_n= a_1\lor\dots\lor a_n$.
\end{itemize}
\end{lemma}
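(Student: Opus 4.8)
The plan is to prove both parts by induction on $n$, using the idempotence of the elements of $C(R)$ established just above and the distributive laws of semirings (SR2, SR3), together with the fact that in $C(R)$ complementation interacts well with products.

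First I would handle part (a). The base case $n=1$ is trivial: $a_1 \cdot a_1 = a_1$ since every element of $C(R)$ is idempotent. For the inductive step, set $b = a_1 \lor \dots \lor a_{n-1}$ and note $b \in C(R)$ by the previous lemma, so $a_1 \lor \dots \lor a_n = b \lor a_n = b + b' a_n$. Then I would compute
\[
(a_1 + \dots + a_n)(b + b'a_n) = (a_1 + \dots + a_{n-1})(b + b'a_n) + a_n(b + b'a_n).
\]
For the first summand, distribute: $(a_1 + \dots + a_{n-1})b + (a_1 + \dots + a_{n-1})b'a_n$; the first piece equals $a_1 + \dots + a_{n-1}$ by the induction hypothesis, while the second vanishes because each $a_i$ ($i \le n-1$) satisfies $a_i \le b$ in the Boolean algebra $C(R)$, hence $a_i b' = 0$ — more concretely, $a_i b' = a_i(a_1 \lor \dots \lor a_{n-1})' = a_i(a_1' \cdots a_{n-1}')$ by De Morgan, and this is $0$ since it contains the factor $a_i a_i' = 0$ (using commutativity of the $a_j$, which holds as they are commuting elements). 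For the second summand, $a_n b + a_n b' a_n = a_n b + a_n b'$ (idempotence and commutativity of $a_n$) $= a_n(b + b') = a_n$. Adding up gives $a_1 + \dots + a_{n-1} + a_n$, as desired.

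Next, part (b): assuming $a_i a_j = 0$ for $i \neq j$, I want $a_1 + \dots + a_n = a_1 \lor \dots \lor a_n$. Again induct on $n$; the base case is immediate. With $b = a_1 \lor \dots \lor a_{n-1} = a_1 + \dots + a_{n-1}$ (induction hypothesis, since the sub-collection still satisfies the orthogonality hypothesis), I have $a_1 \lor \dots \lor a_n = b \lor a_n = b + b' a_n$. It now suffices to show $b' a_n = a_n$, equivalently $b a_n = 0$ (since $a_n = (b + b')a_n = ba_n + b'a_n$). But $b a_n = (a_1 + \dots + a_{n-1})a_n = \sum_{i=1}^{n-1} a_i a_n = 0$ by the orthogonality hypothesis. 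Hence $b + b'a_n = b + a_n = a_1 + \dots + a_n$.

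The main obstacle — really the only subtle point — is making sure that all the manipulations are legitimate inside a possibly non-commutative semiring: the cancellations $a_i a_i' = 0$ and the rewriting of De Morgan complements require that the relevant elements actually commute with the others appearing in the products, which is exactly why the hypothesis restricts to $C(R)$, the commuting complemented elements. I would be careful to invoke commutativity only for elements of $C(R)$ and to use the semiring distributive laws SR2, SR3 rather than any commutativity of multiplication in $R$ at large. Everything else is routine bookkeeping with idempotents and the Boolean operations defined in the preceding lemma.
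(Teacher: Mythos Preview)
Your proof is correct. For part (b) you follow essentially the same inductive route as the paper (group the first $n-1$ terms, use the orthogonality hypothesis to kill the cross product, and conclude).

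For part (a), however, the paper takes a much shorter path: it simply distributes the sum over the product and uses absorption in the Boolean algebra $C(R)$, namely
\[
\Bigl(\sum_{i}a_i\Bigr)\Bigl(\bigvee_i a_i\Bigr)=\sum_i\bigl(a_i\cdot(\textstyle\bigvee_j a_j)\bigr)=\sum_i\bigl(a_i\land(\textstyle\bigvee_j a_j)\bigr)=\sum_i a_i,
\]
since $a_i\land(\bigvee_j a_j)=a_i$. Your inductive argument reaches the same conclusion but works harder: you need De~Morgan to identify $b'$ as $a_1'\cdots a_{n-1}'$, then commute factors to expose $a_ia_i'=0$, and separately handle the $a_n$ summand. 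Nothing is wrong with this, and it has the virtue of staying close to the raw semiring axioms; the paper's version, by contrast, leans on the Boolean structure of $C(R)$ established in the preceding lemma and gets the result in one line.
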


\begin{proof} (a) 
\begin{align*}
(\sum_{i=1}^{n}a_i)(\bigvee_{i=1}^{n}a_i)&=\sum_{i=1}^{n}(a_i(\bigvee_{i=1}^{n}a_i))\\
&=\sum_{i=1}^{n}(a_i\land(\bigvee_{i=1}^{n}a_i))\\
&=\sum_{i=1}^{n}a_i
\end{align*}

(b) The proof is by induction. 

($n=2$):  $a_1\lor a_2= (a_1\lor a_2)+a_1a_2=a_1+a_1'a_2+a_1a_2=a_1+(a_1+a_1')a_2=a_1+a_2$.

($n>2$) 
\begin{align*}
a_1\lor a_2\lor \dots\lor a_n&= a_1\lor(\bigvee_{j=2}^n a_j)\\
&= a_1\lor(\sum_{j=2}^n a_j)\\
&=a_1+(\sum_{j=2}^n a_j)\\
&= a_1+ a_2+ \dots+ a_n
\end{align*}
by applying the basic case of induction to $a_1$ and $\sum_{j=2}^n a_j$.
\end{proof}

\begin{theorem}\label{thm:semiboolean} Let $R$ be a semiring and $\mathbf E$ be an algebra.
The semiring power  $\mathbf E[R]$ is isomorphic to the Boolean power $\mathcal C(C(R)^*,\mathbf E)$.
\end{theorem}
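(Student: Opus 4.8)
The plan is to exhibit an explicit isomorphism between $\mathbf E[R]$ and $\mathbf E[C(R)]$, and then invoke Theorem~\ref{prop:Ebp} (applied to the Boolean algebra $C(R)$) to identify the latter with $\mathcal C(C(R)^*,\mathbf E)$. So the real content is the first isomorphism $\mathbf E[R]\cong \mathbf E[C(R)]$. First I would unpack what a finitely central vector of $V_{\mathbf E}$ looks like. By Theorem~\ref{prop:semimodulecentral}, a vector $\mathbf a=\sum_{\dd\in E}a_\dd\dd$ is $I$-central (for $I=\{\e_1,\dots,\e_n\}$) precisely when $a_\dd=0$ off $I$, $\sum_i a_{\e_i}=1$, each $a_{\e_i}$ is commuting, and the $a_{\e_i}$ are orthogonal idempotents. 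Each such coordinate $a_{\e_i}$ is commuting and idempotent; I would show it is moreover complemented, with complement $\sum_{j\neq i}a_{\e_j}$, using orthogonality and $\sum_j a_{\e_j}=1$, plus Lemma~\ref{lem:somma}(b) to see that $\sum_{j\neq i}a_{\e_j}=\bigvee_{j\neq i}a_{\e_j}\in C(R)$. Hence every coordinate of a finitely central vector already lies in $C(R)$: the set $E[R]$ of finitely central vectors of the $R$-semimodule $V$ coincides, as a set, with the set $E[C(R)]$ of finitely central vectors of the $C(R)$-semimodule $V'$ freely generated by $E$.

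Next I would check that this set-level identification is an algebra isomorphism, i.e. that it respects the lifted operations $g^V$ for $g\in\tau$. Here the key point is that on $C(R)$ the semiring multiplication $rs$ coincides with the Boolean meet $r\wedge s$, and — crucially for the additive part — that for \emph{orthogonal} commuting idempotents the semiring sum agrees with the Boolean join, which is exactly Lemma~\ref{lem:somma}(b). When we compute $g^V(\mathbf v,\mathbf t)=\sum_{\dd,\e}(v_\dd t_\e)g^{\mathbf E}(\dd,\e)$ and collect the coefficient of a fixed $\cc=g^{\mathbf E}(\dd,\e)$, we get $w_\cc=\sum_{(\dd,\e)\in H_\cc} v_\dd t_\e$ (notation as in Lemma~\ref{lem:Ebp}); the summands $v_\dd t_\e$ are pairwise orthogonal commuting idempotents (this was verified inside the proof of Lemma~\ref{lem:Ebp}(i)), so this semiring sum equals the corresponding Boolean join, which is exactly how $g^{V'}$ acts in $\mathbf E[C(R)]$. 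Thus the identity map on underlying sets is a $\tau$-isomorphism $\mathbf E[R]\xrightarrow{\ \sim\ }\mathbf E[C(R)]$.

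Finally, $C(R)$ is a Boolean algebra by the preceding lemma, so Theorem~\ref{prop:Ebp} gives $\mathbf E[C(R)]\cong \mathcal C(C(R)^*,\mathbf E)$. Composing, $\mathbf E[R]\cong\mathcal C(C(R)^*,\mathbf E)$, as required.

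The step I expect to be the main obstacle is the careful bookkeeping in showing that the operations match up — in particular making rigorous the claim that ``semiring sum $=$ Boolean join'' is applicable wherever it is needed. This requires, at each application, first establishing that the relevant family of coordinates is pairwise orthogonal (so that Lemma~\ref{lem:somma}(b) applies), and doing so not only for the coordinates of a single central vector but for the ``mixed'' products $v_\dd t_\e$ that arise when applying a $\tau$-operation. Everything needed is already present in the proofs of Theorem~\ref{prop:semimodulecentral} and Lemma~\ref{lem:Ebp}, but assembling it into a clean argument that the two semimodule structures on $V$ restrict to literally the same algebra on $E[R]=E[C(R)]$ is where the care is required; once that identification is in place, the appeal to Theorem~\ref{prop:Ebp} is immediate.
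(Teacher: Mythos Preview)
Your proposal is correct and follows essentially the same route as the paper: reduce to showing $\mathbf E[R]=\mathbf E[C(R)]$ as $\tau$-algebras, then invoke Theorem~\ref{prop:Ebp}. The paper's proof is slightly terser---it treats the two inclusions $E[R]\subseteq E[C(R)]$ (via Lemma~\ref{lem:somma}(a)) and $E[C(R)]\subseteq E[R]$ (via Lemma~\ref{lem:somma}(b)) symmetrically and then asserts in one line that the operations coincide---whereas you spend more effort justifying the operation-compatibility step and leave the converse inclusion implicit; but the ingredients and the logical structure are the same.
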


\begin{proof} By Theorem \ref{prop:Ebp} it is sufficient to prove that $\mathbf E[R]$ coincides with $\mathbf E[C(R)]$.

Let $\mathbf v=\sum_{i=1}^n v_i\e_i\in E[R]$.
 By Theorem \ref{prop:semimodulecentral} the coordinates $v_1,\dots, v_n$ are idempotent,  commuting and  orthogonal, i.e., $\sum_{i=1}^nv_i=1$ and $v_iv_j=0$, if $i\neq j$. 
 Since $v_i + \sum_{j\neq i} v_j = 1$ and $v_i(\sum_{j\neq i} v_j)=0$,  then we get  $v_i\in C(R)$, for every $i=1,\dots,n$.  By $v_1+\dots+v_n=1$ and by Lemma \ref{lem:somma}(a) we derive $v_1\lor\dots\lor v_n=1$. Thus $\mathbf v\in E[C(R)]$. 

For the converse, let $\mathbf v=\sum_{i=1}^n v_i\e_i\in E[C(R)]$. Then the coordinates $v_1,\dots, v_n$ are idempotent,  commuting and orthogonal in the Boolean algebra $C(R)$, i.e.,  $\bigvee_{i=1}^nv_i=1$ and $v_i\land v_j=v_iv_j=0$ ($i\neq j$). By Lemma \ref{lem:somma}(b), we derive $v_1+ \dots+ v_n=1$. Thus, $\mathbf v\in E[R]$.

The operations on $\mathbf E[R]$ and those on $\mathbf E[C(R)]$  coincide.
\end{proof}


%



\section{Representation Theorems}
In this section we discuss two representation theorems. Actually, we show that any pure $n$BA $\mathbf A$ is isomorphic to the algebra of $n$-central elements of a certain Boolean vector space, namely, the free $n$-generated Boolean vector space over the Boolean algebra $B_\mathbf A$ defined in Section \ref{sec:ba} (Theorem \ref{thm:rep1}).

Moreover, for any finite $n$BA $\mathbf A$, any algebra in the variety $V(\mathbf A)$ is isomorphic to a Boolean power of the generator $\mathbf A$ (Theorem \ref{thm:nBA->E[BA]}). A notable consequence of this result and of Theorem \ref{prop:nbaprim} is Foster's Theorem for primal varieties (cfr. \cite[Thm. 7.4]{BS}).

We will see that the main technical tool will be the concept of an “inner'' Boolean algebra $B_\mathbf A$ contained inside every $n$BA $\mathbf A$. 
We use $B_\mathbf A$ to define the coordinates $a_1,\dots, a_n\in B_\mathbf A$ of every element $a\in A$. Then, the map that associate to every $a\in A$ its coordinates $(a_1,\dots,a_n)$ provides the embedding of $\mathbf A$ into the $n$BA  of $n$-central elements of the Boolean vector space $B_\mathbf A^n$. 

%

\subsection{The inner Boolean algebra of an $n$BA}\label{sec:ba}

Any $n\mathrm{CA}$ $\mathbf{A}$ accommodates within itself Church algebras of dimension $m$, for any $m<n$. Indeed, if $\mathbf{A}$ is an $n\mathrm{CA}$, and $m<n$, set 
\begin{equation}
p_m(a,a_{1},\dots ,a_{m})=q(a,a_{1},\dots ,a_{m},\e_{m+1},\dots ,\e_{n}).
\label{eq}
\end{equation}%
It is straightforward to verify that $\mathbf{A}$
is an $m\mathrm{CA}$ with respect to the defined $p_m$ and $\e_1,\dots,\e_m$. 
A tedious but straightforward computation shows the following lemma.
 
\begin{lemma}\label{lem:nnn} Let $x$ be an $n$-central element of a pure $n\mathrm{CA}$ $\mathbf A$. Then
$x$ is $m$-central  iff 
$p_m(x,y,y,\dots,y) =y$, for all $y\in A$.
\end{lemma}

Let $\mathbf A$ be a pure $n$BA. The set 
$B_{\mathbf{A}}=\{x\in A\ :\ p_2(x,y,y) =y\}$ of the  $2$-central elements of $\mathbf{A}$
with respect to the ternary term operation $p_2$ and constants $\e_1,\e_2$ is a Boolean algebra (see \cite{first}).

\begin{definition}\label{def:coo} Let $\mathbf{A}$ be a pure $n\mathrm{BA}$. 
The Boolean algebra $B_{\mathbf{A}}$,  whose operations are defined as follows: 
$$
x\wedge y=p_2(x,\e_{1},y);\quad x\vee y=p_2(x,y,\e_{2});\quad
\lnot x=p_2(x,\e_{2},\e_{1});\quad 0=\e_{1};\quad 1=\e_{2},
$$
is called the \emph{Boolean algebra of the coordinates of $\mathbf A$}.
\end{definition}

As a matter of notation, we write $q(x,y,z,\bar{u})$ for $q(x,y,z,u,\dots,u)$.
The next lemma gathers some useful properties of the algebra {$B_{\mathbf{A}}$}.

\begin{lemma}\label{lem:qclosure} Let $\mathbf{A}$ be a pure $n\mathrm{BA}$, $y,x_{1},\dots ,x_{n}\in B_{\mathbf{A}}$ and $a\in A$. Then we have:
\begin{itemize}
\item[(i)] $q(y,\e_{1},\e_{2},\bar{\e}_{1}) = y$.
\item[(ii)] $q(a,x_{1},\dots,x_{n})\in B_{\mathbf{A}}$.
\item[(iii)] $B_{\mathbf{A}}=\{q(a,\e_{1},\e_{2},\bar{\e}_{1})\ :\ a\in A\}$.
\end{itemize}
\end{lemma}

\begin{proof}  (i) 
\[
\begin{array}{llll}
q(y,\e_{1},\e_{2},\bar{\e}_{1})   &  = & q(y,\e_{1},\e_{2},p_2(y,\e_1,\e_1),\dots,p_2(y,\e_1,\e_1))&\text{by $y\in   B_{\mathbf{A}}$}\\
  &  = & q(y,\e_{1},\e_{2},q(y,\e_1,\e_1,\e_3,\dots,\e_n),\dots,q(y,\e_1,\e_1,\e_3,\dots,\e_n))&\text{by Def. of $p_2$}  \\
  & =  &   q(y, \e_{1},\e_{2},\e_3,\dots,\e_n)&\text{by ($\mathrm{B4}$)}\\
  &=& y.&
\end{array}
\]
(ii) Let $a\in A$ and $\hat{x}=x_{1},\dots ,x_{n}\in B_\mathbf A$. As $a$ is $n$-central, then by Lemma \ref{lem:nnn}
$q(a,\hat{x})\in B_{\mathbf{A}}$  iff $p_2(q(a,\hat{x}),z,z)=z$, for any $z\in A$.
\begin{equation*}
\begin{array}{llll}
p_2(q(a,\hat{x}),z,z) & = & {q(q(a,\hat{x}),z,z,\e_{3},\dots ,\e_{n})}&
\\ 
& = & q(a,\dots ,q(x_{i},z,z,\e_{3},\dots ,\e_{n}),\dots )&\text{by (B3)} \\ 
& = & {q(a,\dots ,p_2(x_{i},z,z),\dots )} &\text{by Def. of $p_2$}\\ 
& = & q(a,z,\dots ,z,\dots ,z) &\text{by (B2)}\\ 
& = & z&\text{by (B2)}
\end{array}
\end{equation*}
(iii) By (i)-(ii) applied to $q(a,\e_{1},\e_{2},\bar{\e}_{1})$.
\end{proof}





\subsection{The coordinates of an element}

Let $\mathbf{A}$ be a pure $n\mathrm{BA}$, 
and $\sigma$ be a permutation of $1,\dots,n$. For any $a\in A$, 
we write $a^\sigma$ for $q(a, \e_{\sigma 1},\dots,\e_{\sigma n})$. In particular, if $(2i)$ is the transposition defined by $(2i)(2)=i$, $(2i)(i)=2$ and $(2i)(k)=k$ for $k\neq 2,i$, then  we have 
$$a^{(2i)} =q(a, \e_{(2i) 1},\dots,\e_{(2i) n})= q(a, \e_1,\e_{i}, \e_3,\dots,\e_{i-1},\e_{2},\e_{i+1},\dots,\e_n).$$
We write $q(a, \e_{i}/2;\e_{2}/i)$ for $a^{(2i)}$. 

\begin{definition} Let $(B_{\mathbf{A}})^{n}$ be the Boolean vector space of dimension $n$ over $B_{\mathbf{A}}$. 
 The \emph{vector of the coordinates} of an element $a\in A$ is  a tuple $(a_{1},\dots ,a_{n})\in(B_\mathbf{A})^n$, where 
$$a_{i}=q(a^{(2i)},\e_{1},\e_{2},\bar{\e}_{1}).$$ 
\end{definition}
Observe that by Lemma \ref{lem:qclosure}(ii) $a_{i}\in B_{\mathbf{A}}$ for every $i$. The next lemma shows that the
coordinate $a_{i}$ admits a simpler description.

\begin{lemma}
\label{lem:coordinate} If $\mathbf{A}$ is a pure {$n\mathrm{BA}$} and $a\in A$,
then $a_{i}=q(a,\e_{1},\e_{1},\dots,\e_{1},\e_{2},\e_{1},\dots,\e_{1})$, where $\e_2$ is at position $i$.
\end{lemma}

\begin{proof} 
\begin{equation*}
\begin{array}{llll}
a_{i} & = & q(a^{(2i)},\e_{1},\e_{2},\bar{\e}_{1})& \text{by Def.} \\ 
& = & q(q(a,\e_{i}/2;\e_{2}/i),\e_{1},\e_{2},\bar{\e}_{1}) &\\ 
& = & q(q(a,\e_{1},\e_{i},\e_{3},\dots ,\e_{i-1},\e_{2},\e_{i+1},\dots
,\e_{n}),\e_{1},\e_{2},\bar{\e}_{1})& \\ 
& = & q(a,\e_1,\e_1,\dots,\e_1,\e_2,\e_1,\dots,\e_1)&\text{$\e_2$ at position $i$}\\
& = & q(a,\e_{2}/i;\e_{1}/\bar{\imath}).&
\end{array}%
\end{equation*}
\end{proof}

\begin{example} By Theorem \ref{prop:semimodulecentral} an $n$-subset of $X$ (see Example \ref{exa:parapa}) is $n$-central iff it is an $n$-partition of $X$.
If $ P = (P_1,\dots,P_n)$ is an $n$-partition of $X$, then
the $i$-th coordinate of $P$ is
$(X\setminus P_i, P_i,\emptyset,\dots,\emptyset)$.
\end{example}

The following lemma follows directly from the definition.
\begin{lemma}\label{lem:boolcord}
If $a\in B_{\mathbf{A}}$, then 
$a_{1}=\lnot{a};\quad a_{2}=a;\quad a_{k}=0\ {(3\leq k\leq n)}.$
\end{lemma}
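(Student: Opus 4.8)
The plan is to recall the definitions of the Boolean operations on $B_{\mathbf A}$ from Definition \ref{def:coo} and of the coordinates from Lemma \ref{lem:coordinate}, and to simply evaluate $a_1,\dots,a_n$ when $a\in B_{\mathbf A}$. By Lemma \ref{lem:coordinate} we have $a_k=q(a,\e_1,\dots,\e_1,\e_2,\e_1,\dots,\e_1)$ with $\e_2$ at position $k$. For $k=1$ this is $q(a,\e_2,\e_1,\bar\e_1)=p(a,\e_2,\e_1)=\lnot a$ by the defining equation for negation in $B_{\mathbf A}$. For $k=2$ it is $q(a,\e_1,\e_2,\bar\e_1)$, which by Lemma \ref{lem:qclosure}(i) (applicable since $a\in B_{\mathbf A}$) equals $a$.

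The remaining case $3\le k\le n$ is the only one requiring a short computation. Here $a_k=q(a,\e_1,\dots,\e_1,\e_2,\e_1,\dots,\e_1)$ with $\e_2$ in position $k\ge 3$; I want to show this equals $0=\e_1$. The idea is to use that $a\in B_{\mathbf A}$ means $p(a,y,y)=y$ for all $y$, i.e. $q(a,y,y,\e_3,\dots,\e_n)=y$; taking $y=\e_1$ gives $q(a,\e_1,\e_1,\e_3,\dots,\e_n)=\e_1$. Then I rewrite $a_k$ using the central-element identities (B1)-(B4): substitute $\e_1=q(a,\e_1,\e_1,\e_3,\dots,\e_n)$ into every argument of the outer $q(a,\dots)$ except position $k$, and plug $\e_2$ (also written via (B1) as $\e_2=q(a,\e_1,\e_2,\e_3,\dots)$ suitably, or handled directly) into position $k$, and collapse with (B4). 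Concretely, by (B4),
\[
q\bigl(a,q(a,\e_1,\e_1,\e_3,\dots,\e_n),\dots,\e_2,\dots,q(a,\e_1,\e_1,\e_3,\dots,\e_n)\bigr)=q(a,\e_1,\e_1,\e_3,\dots,\e_n),
\]
where on the left the $k$-th argument is chosen to make the diagonal entry in column $k$ equal $\e_1$ as well; since $q(a,\e_1,\e_1,\e_3,\dots,\e_n)=\e_1$, the left-hand side is exactly $a_k$ by Lemma \ref{lem:coordinate}, and the right-hand side is $\e_1=0$.

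I do not anticipate a serious obstacle: the whole statement is a direct unwinding of the definitions, with the mild point of care being the bookkeeping of matrix positions when invoking (B4) for $3\le k\le n$, so that the diagonal of the chosen $n\times n$ matrix reads $\e_1$ in every column. One must also double-check that Lemma \ref{lem:qclosure}(i), stated for $y\in B_{\mathbf A}$, is being applied to $a\in B_{\mathbf A}$ — which is exactly its hypothesis — to handle $k=2$. With these observations the proof is a three-line case split and I would present it as such.
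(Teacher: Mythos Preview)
Your case $k=2$ is fine, but the case $k=1$ has a genuine gap. You write $a_1=q(a,\e_2,\e_1,\bar\e_1)$ via Lemma~\ref{lem:coordinate} and then assert this equals $p(a,\e_2,\e_1)$. It does not, by definition: $p(a,\e_2,\e_1)=q(a,\e_2,\e_1,\e_3,\dots,\e_n)$, with tail $\e_3,\dots,\e_n$, not $\e_1,\dots,\e_1$. The two expressions \emph{are} equal for $a\in B_{\mathbf A}$, but proving that needs exactly the (B4)-style manipulation you reserve for $k\ge 3$. A much cleaner route is to go back to the \emph{original} definition $a_1=q(a^{(21)},\e_1,\e_2,\bar\e_1)$: here $a^{(21)}=q(a,\e_2,\e_1,\e_3,\dots,\e_n)=p(a,\e_2,\e_1)=\lnot a\in B_{\mathbf A}$, and then Lemma~\ref{lem:qclosure}(i) applied to $\lnot a$ gives $a_1=\lnot a$ in one line. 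This is presumably what the paper means by ``follows directly from the definition.''

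For $k\ge 3$ your idea is right but the bookkeeping is off. You propose writing the $k$-th argument $\e_2$ as $q(a,\e_1,\e_2,\e_3,\dots,\e_n)$ ``via (B1)'', but (B1) says that equals $a$, not $\e_2$. The correct choice for the $k$-th row is $(\e_2,\e_2,\e_3,\dots,\e_n)$, so that $q(a,\e_2,\e_2,\e_3,\dots,\e_n)=p(a,\e_2,\e_2)=\e_2$ by $2$-centrality of $a$. With the remaining rows equal to $(\e_1,\e_1,\e_3,\dots,\e_n)$ (each giving $p(a,\e_1,\e_1)=\e_1$), the diagonal read off by (B4) is $(\e_1,\e_1,\e_3,\dots,\e_n)$ --- not all $\e_1$'s as you state --- and hence $a_k=q(a,\e_1,\e_1,\e_3,\dots,\e_n)=p(a,\e_1,\e_1)=\e_1=0$. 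With these two fixes the argument goes through; alternatively, as in the proof of Lemma~\ref{lem:coordinate2}, one may simply verify all three identities in the generator $\mathbf n$, where $B_{\mathbf n}=\{\e_1,\e_2\}$.
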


The coordinates of the result of an application of $q$ to elements
of $A$ can be expressed as the result of an application of the Boolean
operations of $B_{\mathbf{A}}$ to the coordinates of the arguments.

\begin{lemma}
\label{lem:coordinate2} Let $\mathbf{A}$ be a pure $n\mathrm{BA}$. For every $a,b^{1},\dots ,b^{n}\in A$ we have:
\begin{itemize}
\item[(i)] The coordinates of $a$ are fully orthogonal, i.e., $\bigvee_{i=1}^n a_i = 1$ and $a_i\land a_k=0$, for every $i\neq k$.
 \item[(ii)] $q(a,b^{1},\dots ,b^{n})_{i} =q(a,{(b^{1})_{i},\dots ,(b^{n})_{i}}) 
=\bigvee_{j=1}^{n}(a_{j}\wedge (b^{j})_{i})$, where $a_{j}$ is the $j$-th coordinate of $a$ and $(b^{j})_{i}$ is the $i$-th coordinate of $b^{j}$. 
\end{itemize}
The join $\bigvee $ and the meet $\wedge $ are
taken in the Boolean algebra $B_{\mathbf{A}}$.
\end{lemma}
\begin{proof}
It suffices to check the previous identities in the $n$-element generator $\mathbf n$ of 
the variety of pure $n\mathrm{BA}$s.
\end{proof}



\begin{lemma}
\label{lem:samecoordinates} Let $\mathbf{A}$ be a pure $n\mathrm{BA}$. If $a,b\in A$ have the same coordinates, i.e. $a_i=b_i$ for all $i$, then $a=b$.
\end{lemma}

\begin{proof} 
By Theorem \ref{lem:subirr}   $\mathbf{A}$ is a subalgebra of $\mathbf{n}^{I}$, for an
appropriate $I$. It is routine to verify in $\mathbf{n}^{I}$ that if two
elements have the same coordinates then they coincide, whence the same is
true for $\mathbf{A}$.
\end{proof}

\subsection{The main theorems}

We recall that an algebra $\mathbf A$ is a retract of an algebra $\mathbf B$, and we write $\mathbf A\vartriangleleft\mathbf B$, if there exist two homomorphisms $f:\mathbf A\to \mathbf B$ and $g:\mathbf B\to \mathbf A$ such that $g\circ f=\mathrm{Id}_A$.

\begin{theorem} \label{thm:rep1}  Let $\mathbf{A}$ be a pure $n\mathrm{BA}$, $\mathrm{Ce}(B_\mathbf A^n)$ be the $n\mathrm{BA}$ of $n$-central elements of the Boolean vector space $B_\mathbf A^n$, $\mathbf n[B_\mathbf A]$ (resp. $\mathbf A[B_\mathbf A]$) be the semiring power of $\mathbf n$ (resp. $\mathbf A$) by $B_\mathbf A$. Then we have:
$$\mathbf{A} \cong \mathrm{Ce}(B_\mathbf A^n)\cong \mathbf n[B_\mathbf A] \vartriangleleft \mathbf A[B_\mathbf A].$$
\end{theorem}

\begin{proof}
We prove that  $\mathbf{A} \cong \mathbf n[B_\mathbf A]$.
Define the map $f:\mathbf{A}\rightarrow \mathbf{n}[B_{\mathbf{A}}]$
as follows, for any $a\in A$: 
$$f(a)=\sum_{i=1}^{n}a_{i}\e_{i},$$ 
 where $a_{1},\dots ,a_{n}$ are the coordinates of $a$. The map $f$ is well-defined, because the coordinates of $a$ are fully orthogonal in $B_\mathbf A$.
 Moreover, $f$ is injective by Lemma \ref{lem:samecoordinates} and it preserves the operation $q$:
\begin{equation*}
\begin{array}{llll}
q^{\mathbf{n}[B_\mathbf{A}]}(f(a),f(b^{1}),\ldots ,f(b^{n})) & = & q^{\mathbf{n}[B_\mathbf{A}]}(\sum_{i=1}^{n}a_{i}\e_{i},\sum_{i=1}^{n}b^1_{i}\e_{i},\ldots ,\sum_{i=1}^{n}b^n_{i}\e_{i})& \\ 
& = & a_{1}(\sum_{i=1}^{n}b^1_{i}\e_{i}) + \dots+ a_{n} (\sum_{i=1}^{n}b^n_{i}\e_{i})\\
& = & \sum_{i=1}^{n}(a_{1}\land b^1_{i})\e_{i} + \dots+ \sum_{i=1}^{n}(a_{n} \land b^n_{i})\e_{i}\\
& = &\sum_{i=1}^{n}\left( (a_{1}\land b^1_{i}) \lor \dots\lor (a_{n}\land b^n_{i})\right)\e_i&\\

& = & \sum_{i=1}^{n} q^\mathbf A(a,b^{1},\ldots ,b^{n})_{i} \e_i&\text{by Lem. \ref{lem:coordinate2}(ii)}\\

& = & f(q^\mathbf A(a,b^{1},\ldots ,b^{n})).&
\end{array}
\end{equation*}
Next we prove that $f$ is
surjective. Let $\mathbf a=\sum_{i=1}^{n}a_{i}\e_{i}\in \mathbf{n}[B_{\mathbf{A}}]$, so that $a_1\vee\dots\vee a_n=1$ and $a_i\land a_j=0$ for $i\neq j$. Recall from Section \ref{sec:ba} the definition of the ternary operator $p_2$. We will show that $f(b)=\mathbf a$, where 
$$b=p_2(a_{1},p_2(a_{2},p_2(a_{3},p_2(\dots p_2(a_{n-1},\e_{n},\e_{n-1}))\dots),\e_{3}),\e_{2}),\e_{1}).$$
We write $b^{n-1}=p_2(a_{n-1},\e_{n},\e_{n-1})$ and $b^i=p_2(a_{i},b^{i+1},\e_{i})$ ($i=1,\dots,n-2$)
 in such a way that $b=b^1$.

\noindent As  $\bigvee_{i=1}^n a_i=1$ and $a_i\land a_j=0$ ($i\neq j$), we have that the complement $\lnot a_i$ of $a_i$ is equal to 
$\bigvee_{j\neq i}a_{j}$. 
For an arbitrary $x\in A$, we have:
$$
\begin{array}{llll}
p_2(a_i,x,\e_i)_k& =  &q(a_i,x,\e_i,\e_3,\dots,\e_n)_k&\\
& =  &q(a_i,x_k,(\e_i)_k,(\e_3)_k,\dots,(\e_n)_k)&\text{by Lemma \ref{lem:coordinate2}(ii)}\\
&=& ((a_i)_1\land x_k) \lor ((a_i)_2\land (\e_i)_k)&\text{by  Lemma \ref{lem:coordinate2}(ii)}\\
&=&(\lnot a_i\land x_k) \lor (a_i\land (\e_i)_k)&\text{by Lemma \ref{lem:boolcord}}\\
\end{array}
$$
Since $\e_1=0$ is the bottom  and $\e_2=1$ is the top of $B_\mathbf A$, then by Lemma \ref{lem:coordinate} we obtain:
$$p_2(a_i,x,\e_i)_k=\begin{cases}(\lnot a_i\land x_i)\lor (a_i\land \e_2) = (\lnot a_i\land x_i) \lor a_i &\text{if $k=i$}\\
(\lnot a_i\land x_k) \lor (a_i\land \e_1) = \lnot a_i\land x_k &\text{if $k\neq i$}\\
 \end{cases}
$$
If the coordinates $x_j$ are equal to $0$ for every $j\leq i$, then we derive:
$$(\forall j\leq i.\ x_j=0)\ \Rightarrow\ p_2(a_i,x,\e_i)_k=\begin{cases} a_i &\text{if $k=i$}\\
0 &\text{if $k< i$}\\
\lnot a_i\land x_k &\text{if $k> i$}\\
 \end{cases}
$$
If $i=n-1$ and $x=\e_n$,  then we have: 
 $$(b^{n-1})_k=p_2(a_{n-1},\e_{n},\e_{n-1})_k=\begin{cases} a_{n-1} &\text{if $k=n-1$}\\
0 &\text{if $k< n-1$}\\
\lnot a_{n-1}\land (\e_{n})_n=\lnot a_{n-1}\land \e_{2}=\lnot a_{n-1} &\text{if $k=n$}\\
 \end{cases}
$$
By iterating we get 
 $$(b^{i})_k=p_2(a_{i},b^{i+1},\e_{i})_k =\begin{cases} a_{k} &\text{if $i\leq k\leq n-1$}\\
0 &\text{if $k< i$}\\
\lnot \bigvee_{j=i}^{n-1} a_j &\text{if $k=n$}\\
 \end{cases}
$$
Hence, recalling that $b=b^1$, we have the conclusion $f(b)=\mathbf a$.

The definition of the isomorphism $\mathrm{Ce}(B_\mathbf A^n)\cong \mathbf n[B_\mathbf A]$ is straightforward. 

Since $\mathbf n[B_\mathbf A]$ is a subalgebra of $\mathbf A[B_\mathbf A]$, then the retraction  $\mathbf n[B_\mathbf A] \vartriangleleft \mathbf A[B_\mathbf A]$ can be defined as follows. By Theorem \ref{lem:subirr} $\mathbf A$ is a subalgebra of $\mathbf{n}^I$ for some set $I$. If $i\in I$, then we denote by $\pi_i: \mathbf A \to \mathbf n$  the projection in the $i$-th component of $\mathbf n^I$. 
Then we define $g:\mathbf A[B_\mathbf A]\to \mathbf n[B_\mathbf A]$ as follows, for every $\sum_{a\in A} v_a a\in \mathbf A[B_\mathbf A]$: $g(\sum_{a\in A} v_a a)=\sum_{a\in A} v_a \pi_i(a)$. It is easy to check that $g$ determines a retraction.
\end{proof}

We are now ready to extend the above result to any similarity type. 

\begin{theorem}
\label{thm:nBA->E[BA]} Let $\mathbf{A}$ be an 
 $n\mathrm{BA}$ of type $\tau$, whose minimal subalgebra $\mathbf E$ has finite cardinality $n$. Then we have: 
 $$\mathbf A\cong \mathbf{E}[B_\mathbf{A}]\cong \mathcal C(B_\mathbf A^*, \mathbf E).$$
\end{theorem}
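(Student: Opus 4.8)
The plan is to prove the two displayed isomorphisms separately, the rightmost one being almost immediate: $\mathbf E[B_{\mathbf A}]\cong\mathcal C(B_{\mathbf A}^*,\mathbf E)$ is just the instance $R=B_{\mathbf A}$ of Theorem~\ref{prop:Ebp}, since $B_{\mathbf A}$ is a Boolean algebra by Definition~\ref{def:coo}. So all the work goes into $\mathbf A\cong\mathbf E[B_{\mathbf A}]$, and the map I would use is the coordinate map already appearing in Theorem~\ref{thm:rep1}: $f(a)=\sum_{i=1}^{n}a_i\e_i$, where $a_1,\dots,a_n\in B_{\mathbf A}$ are the coordinates of $a$. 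Note first that $E=\{\e_1,\dots,\e_n\}$: since $|E|=n\geq 2$ the algebra $\mathbf A$ is nontrivial, and then the constants are pairwise distinct by the computation in the proof of Lemma~\ref{lem:subirr}.

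I would then check that $f$ is a well-defined bijection onto $E[B_{\mathbf A}]$. Well-definedness is Theorem~\ref{prop:semimodulecentral}: conditions (i)--(ii) there amount to full orthogonality of the coordinate vector, which is Lemma~\ref{lem:coordinate2}(i), while (iii)--(iv) hold because $B_{\mathbf A}$ is a commutative Boolean algebra, all of whose elements are idempotent. Bijectivity I would get for free from Theorem~\ref{thm:rep1} applied to the pure reduct $\mathbf A_0$: the coordinates and the Boolean operations of $B_{\mathbf A}$ are built solely from $q$ and the constants, so $B_{\mathbf A_0}=B_{\mathbf A}$, and the carrier of $\mathbf n[B_{\mathbf A}]$ equals $E[B_{\mathbf A}]$ (which vectors are finitely central does not depend on the non-pure operations, by the coordinate characterisation in Theorem~\ref{prop:semimodulecentral}). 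Thus the same $f$ is an isomorphism $\mathbf A_0\cong\mathbf n[B_{\mathbf A}]$, hence a bijection $A\to E[B_{\mathbf A}]$ that also preserves $q$ and the constants.

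The \emph{crux}, and the step I expect to be the main obstacle, is that $f$ preserves every $g\in\tau$ of arity $k$. Writing out $g^{\mathbf E[B_{\mathbf A}]}$ as the linear lifting of $g^{\mathbf E}$ and collecting the terms of the resulting sum according to the value $g^{\mathbf E}(\e_{i_1},\dots,\e_{i_k})\in E$, the equality $f(g^{\mathbf A}(b^1,\dots,b^k))=g^{\mathbf E[B_{\mathbf A}]}(f(b^1),\dots,f(b^k))$ boils down to the coordinate identity
\[
g^{\mathbf A}(b^1,\dots,b^k)_{\ell}=\bigvee\{\,b^1_{i_1}\wedge\cdots\wedge b^k_{i_k}: g^{\mathbf E}(\e_{i_1},\dots,\e_{i_k})=\e_{\ell}\,\}\qquad(1\leq\ell\leq n),
\]
with joins and meets in $B_{\mathbf A}$. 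Both sides are $\tau$-terms in the variables $b^1,\dots,b^k$ (the coordinates and the Boolean operations being $q$-terms), so this is an equation, and to prove it in $\mathbf A$ it suffices to prove it in $\mathbf E$, because $\mathbf A\in SP(\mathbf E)$. That inclusion is the key structural fact: by Birkhoff, $\mathbf A$ is a subdirect product of subdirectly irreducible $n$BAs; each such quotient $\mathbf A/\theta$ has exactly $n$ elements (Lemma~\ref{lem:subirr}) and is hence generated by its (necessarily pairwise distinct) constants, so the composite $\mathbf E\hookrightarrow\mathbf A\twoheadrightarrow\mathbf A/\theta$ is onto, and, $\mathbf E$ being simple and $\mathbf A/\theta$ nontrivial, an isomorphism. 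Finally, in $\mathbf E$ the coordinate identity is immediate: for $b^j=\e_{m_j}$ one has $b^j_i=\e_2$ if $i=m_j$ and $b^j_i=\e_1$ otherwise, so, reading $\e_1,\e_2$ as $0,1$ in $B_{\mathbf E}$, both sides equal $1$ exactly when $g^{\mathbf E}(\e_{m_1},\dots,\e_{m_k})=\e_{\ell}$.

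Combining the three observations, $f$ is an isomorphism $\mathbf A\cong\mathbf E[B_{\mathbf A}]$; composing with Theorem~\ref{prop:Ebp} gives the full chain $\mathbf A\cong\mathbf E[B_{\mathbf A}]\cong\mathcal C(B_{\mathbf A}^*,\mathbf E)$. The one point that needs care throughout is to keep the coordinate identity and the operations of $B_{\mathbf A}$ expressed strictly as $\tau$-terms in $q$ and the constants, so that the reduction to a computation inside $\mathbf E$ along the subdirect embedding $\mathbf A\le_{\mathrm{sd}}\mathbf E^S$ is legitimate.
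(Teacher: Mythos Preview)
Your proposal is correct and follows essentially the same strategy as the paper: use the coordinate map $f(a)=\sum_i a_i\e_i$ from Theorem~\ref{thm:rep1} (which already gives bijectivity and preservation of $q$ on the pure reduct), then verify preservation of each $g\in\tau$ by reducing the required coordinate identity to a computation in $\mathbf E$ via the subdirect embedding $\mathbf A\le_{\mathrm{sd}}\mathbf E^{I}$, and finally invoke Theorem~\ref{prop:Ebp} for the second isomorphism. The only stylistic difference is that you package the key step as ``both sides are $\tau$-terms, hence it suffices to check the equation in the generator $\mathbf E$'', whereas the paper carries out the same verification pointwise at each coordinate $h\in I$ of $\mathbf E^{I}$; your formulation is a clean abstraction of exactly that computation, and your explicit argument that every subdirectly irreducible quotient of $\mathbf A$ is isomorphic to $\mathbf E$ fills in a detail the paper leaves to the earlier remark that $V(\mathbf A)=V(\mathbf E)$.
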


\begin{proof} As the algebra $\mathbf E$ is an expansion of the $n$BA $\mathbf n$ by operations of type $\tau$, it is sufficient to prove that the map $f:\mathbf{A}\rightarrow \mathbf{n}[B_{\mathbf{A}}]$, defined in the proof of Theorem \ref{thm:rep1}, is a homomorphism for the type $\tau$. The other isomorphism follows from Theorem \ref{prop:Ebp}.

Let now $g\in \tau$ be an operator that is
supposed to be binary to avoid unnecessary notational issues. 
We observe that, for $a,b\in A$,
\[
 g^{\mathbf{E}[B_\mathbf{A}]}(f(a),f(b))   =  g^{\mathbf{E}[B_\mathbf{A}]}(\sum_{i=1}^{n}a_{i}\e_{i},\sum_{j=1}^{n}b_{j}\e_{j})  
    =   \sum_{i=1}^{n}\sum_{j=1}^{n} (a_{i}\land b_{j})g^{\mathbf{E}}(\e_{i},\e_{j}) =\sum_{k=1}^{n}w_{k}\e_{k},
\]
where 
$$w_{k}=\bigvee_{\{(i,j):g^{^{\mathbf{E}}}(\e_{i},\e_{j})=\e_{k}\}} a_{i}\land b_{j}.$$
We have to show that $f(g^\mathbf A(a,b))$ is equal to
$\sum_{k=1}^{n}w_{k}\e_{k}$, that is, $g^\mathbf A(a,b)_k = w_k$ for every $1\leq k\leq n$.
Since by hypothesis the variety $V(\mathbf A)$ generated by $\mathbf A$ coincides with the variety $V(\mathbf E)$ generated by $\mathbf E$, then we get the conclusion if the variety $V(\mathbf E)$ satisfies the following identities:
$$g(x,y)_k= g(x_k,y_k)= \bigvee_{\{(i,j):g^{^{\mathbf{E}}}(\e_{i},\e_{j})=\e_{k}\}} x_{i}\land y_{j},$$
where $g(x,y)_k$, $x_i,y_j, x_k,y_k$ are coordinates. 
The first identity follows, because $g^{\mathbf{A}}(x,y)_{k}=q(g^\mathbf{A}(x,y),\e_{1},\dots,\e_1,\e_2,\e_1,\dots,\e_1)=_{(B3)}g^\mathbf{A}(x_k,y_k)$, where $\e_2$ is at $k$-position. The second identity can be easily checked in generator $\mathbf E$ by considering that $B_\mathbf E=\{\e_1,\e_2\}$.
\end{proof}

One of the most remarkable properties of the $2$-element Boolean algebra, called \emph{primality} in universal algebra \cite[ Sec. 7 in Chap. IV]{BS}, is the definability of all finite Boolean functions in terms of a certain set of term operations, e.g. the connectives {\sc and, or, not}. This property is inherited by $n$BAs. 

\begin{definition}
\begin{enumerate}
    \item Let $\mathbf{A}$ be a nontrivial $\tau $-algebra. $\mathbf{A}$ is \emph{primal} if it is of finite cardinality and, for every function $f:A^{n}\rightarrow A$ ($n\geq 0$), there is a $\tau $-term $t(x_1,\dots,x_n)$ such that for all $a_{1},\dots,a_{n}\in A$, $f( a_{1},\dots,a_{n}) =t^{\mathbf{A}}(a_{1},\dots,a_{n})$.
    \item A variety $\mathcal V$ is primal if $\mathcal V= V(\mathbf A)$ for a  primal algebra $\mathbf A$.
\end{enumerate}
\end{definition}




It was shown in \cite{SBLP} that:

\begin{theorem}\label{prop:nbaprim} Let $\mathbf{A}$ be a finite $\tau$-algebra of cardinality $n$. Then $\mathbf A$ is primal if and only if it is an $n\mathrm{BA}$. 
\end{theorem}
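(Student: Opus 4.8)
The plan is to prove the biconditional in both directions, with almost all the work already done by Lemma~\ref{lem:easy}.

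\medskip

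\textbf{($\Leftarrow$).} Suppose $\mathbf A$ is an $n\mathrm{BA}$ of cardinality $n$. Then by definition $\mathbf A$ is of finite cardinality, and Lemma~\ref{lem:easy} says that for every $k\geq 1$ and every $f\colon A^{k}\to A$ there is a (hnf) $\tau$-term $t(x_1,\dots,x_k)$ with $f=t^{\mathbf A}$. The only gap relative to the definition of primality is the case $k=0$, i.e.\ the nullary functions (the constants): but $\mathbf A$ being an $n\mathrm{DA}$ guarantees that each element $\e_i^{\mathbf A}$ is term definable, and since $|A|=n$ every element of $A$ is some $\e_i^{\mathbf A}$, hence a constant term. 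Therefore every finitary operation on $A$ is term definable, so $\mathbf A$ is primal.

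\medskip

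\textbf{($\Rightarrow$).} Suppose $\mathbf A$ is a primal $\tau$-algebra of cardinality $n$; call its universe $\{a_1,\dots,a_n\}$. We must exhibit terms $q,\e_1,\dots,\e_n$ witnessing that $\mathbf A$ is an $n\mathrm{DA}$ and that every element is $n$-central. By primality, each singleton-valued nullary function is a term, so we may take $\e_i^{\mathbf A}:=a_i$, each term definable. Again by primality, the $(n+1)$-ary function $g\colon A^{n+1}\to A$ defined by $g(a_i,b_1,\dots,b_n):=b_i$ (which is well defined since $a_1,\dots,a_n$ exhaust $A$) is realised by some $\tau$-term $q(x,y_1,\dots,y_n)$. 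With these choices $q^{\mathbf A}(\e_i^{\mathbf A},b_1,\dots,b_n)=b_i$ for all $b_j\in A$ and all $i$, so $\mathbf A$ is an $n\mathrm{DA}$. It remains to check that every $c\in A$ is $n$-central. Since $|A|=n$, we have $c=a_i=\e_i^{\mathbf A}$ for some $i$; so it suffices to show that each constant $\e_i$ is $n$-central in any $n\mathrm{DA}$. For this use the characterisation in Theorem~\ref{thm:centrale}: conditions (B1)--(B3) with $c=\e_i$ reduce, via $q(\e_i,z_1,\dots,z_n)=z_i$, to $\e_i=\e_i$, $x=x$, and $\sigma(\mathbf x_i)=\sigma(\mathbf x_i)$ respectively, all trivially true. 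Hence every element of $\mathbf A$ is $n$-central, so $\mathbf A$ is an $n\mathrm{BA}$.

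\medskip

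The proof carries no real obstacle: the content is entirely front-loaded into Lemma~\ref{lem:easy} (for the interesting direction) and into Theorem~\ref{thm:centrale}'s equational description of centrality (for the converse). The only point deserving a word of care is the handling of the nullary case in the definition of primality, which is why we must invoke term-definability of the $\e_i$ and the hypothesis $|A|=n$; beyond that the argument is bookkeeping.
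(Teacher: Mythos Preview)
Your proof is correct and is exactly the natural unpacking of what the paper leaves implicit: the paper merely states that the theorem is ``a trivial application of Lemma~\ref{lem:easy}'', and your argument spells out both directions in the obvious way (Lemma~\ref{lem:easy} for $(\Leftarrow)$, primality to term-define $q$ and the $\e_i$ for $(\Rightarrow)$, with Theorem~\ref{thm:centrale} to verify centrality of the constants). Your extra care about the nullary case is a nice touch the paper glosses over.
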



It follows that,
if $\mathbf{A}$ is a primal algebra of cardinality $n$, then the variety generated by $\mathbf{A}$ is a variety of $n\mathrm{BA}$s.
Notice that varieties of $n$BAs generated by more than one algebra are not primal.

As a corollary to Theorem \ref{thm:nBA->E[BA]} and Theorem \ref{prop:nbaprim}, we obtain
Foster's Theorem for primal algebras:

\begin{corollary}
\label{thm:fstr} If $\mathbf{P}$ is a primal algebra of cardinality $n$, then any $\mathbf{A}\in V(\mathbf{P})$ is isomorphic to the Boolean power $\mathcal C(B_\mathbf A^*, \mathbf P)$,
for the Boolean algebra $B_\mathbf A$ defined in Section \ref{sec:ba}.
\end{corollary}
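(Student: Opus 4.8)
The plan is to obtain the corollary as a specialisation of Theorem~\ref{thm:nBA->E[BA]}: the two facts to establish are that $V(\mathbf{P})$ is a variety of $n\mathrm{BA}$s and that the minimal subalgebra of any of its members is isomorphic to $\mathbf{P}$. Granting these, Theorem~\ref{thm:nBA->E[BA]} does the rest, since the Boolean power $\mathcal C(B_{\mathbf{A}}^{*},-)$ depends on its second argument only up to isomorphism (compose a continuous function $B_{\mathbf{A}}^{*}\to E$ with an isomorphism $\mathbf{E}\to\mathbf{P}$). The trivial algebra being a harmless degenerate case ($B_{\mathbf{A}}$ is then the one-element Boolean algebra, its Stone space is empty, and $\mathcal C(\emptyset,\mathbf{P})$ is the trivial algebra), I would fix a nontrivial $\mathbf{A}\in V(\mathbf{P})$ from now on.

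By Theorem~\ref{prop:nbaprim}, $\mathbf{P}$ is an $n\mathrm{BA}$ because it is primal of cardinality $n$; and since the $n\mathrm{BA}$s of type $\tau$ form a variety, $\mathbf{A}$ is an $n\mathrm{BA}$ as well. The only step requiring a genuine argument is the identification of the minimal subalgebra $\mathbf{E}$ of $\mathbf{A}$. Using primality, fix closed $\tau$-terms $c_{1},\dots,c_{n}$ whose values enumerate $P$. Every closed term $t$ satisfies $t^{\mathbf{P}}=c_{k}^{\mathbf{P}}$ for some $k$, so the identity $t\approx c_{k}$ holds in $\mathbf{P}$ and hence in $\mathbf{A}$; therefore the universe of $\mathbf{E}$ is exactly $\{c_{1}^{\mathbf{A}},\dots,c_{n}^{\mathbf{A}}\}$. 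These elements are pairwise distinct: given $i\neq j$, primality provides a binary $\tau$-term $t_{ij}$ with $t_{ij}^{\mathbf{P}}(c_{i}^{\mathbf{P}},x)=x$ and $t_{ij}^{\mathbf{P}}(c_{j}^{\mathbf{P}},x)=c_{1}^{\mathbf{P}}$ for all $x\in P$, so $\mathbf{P}$, and hence $\mathbf{A}$, satisfies $t_{ij}(c_{i},x)\approx x$ and $t_{ij}(c_{j},x)\approx c_{1}$; were $c_{i}^{\mathbf{A}}=c_{j}^{\mathbf{A}}$, we would get $a=t_{ij}^{\mathbf{A}}(c_{i}^{\mathbf{A}},a)=t_{ij}^{\mathbf{A}}(c_{j}^{\mathbf{A}},a)=c_{1}^{\mathbf{A}}$ for every $a\in A$, contradicting nontriviality. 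Thus $|E|=n$, and the bijection $c_{k}^{\mathbf{P}}\mapsto c_{k}^{\mathbf{A}}$ is an isomorphism $\mathbf{P}\cong\mathbf{E}$: for every $\ell$-ary operation $\sigma$ of $\tau$, writing $\sigma^{\mathbf{P}}(c_{i_{1}}^{\mathbf{P}},\dots,c_{i_{\ell}}^{\mathbf{P}})=c_{m}^{\mathbf{P}}$, the identity $\sigma(c_{i_{1}},\dots,c_{i_{\ell}})\approx c_{m}$ holds in $\mathbf{P}$, hence in $\mathbf{A}$, so the bijection commutes with $\sigma$.

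With $\mathbf{E}\cong\mathbf{P}$ of cardinality $n$ in hand, Theorem~\ref{thm:nBA->E[BA]} applies to $\mathbf{A}$ and yields $\mathbf{A}\cong\mathbf{E}[B_{\mathbf{A}}]\cong\mathcal C(B_{\mathbf{A}}^{*},\mathbf{E})\cong\mathcal C(B_{\mathbf{A}}^{*},\mathbf{P})$, which is precisely the claim. I expect the identification $\mathbf{E}\cong\mathbf{P}$ to be the sole real obstacle — morally the statement that a primal algebra sits inside every nontrivial member of its variety as the minimal subalgebra — while everything afterwards is an immediate appeal to the preceding representation theorem.
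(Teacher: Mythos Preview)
Your proof is correct and follows the same strategy as the paper's: establish that $\mathbf{P}$ is an $n\mathrm{BA}$ via Theorem~\ref{prop:nbaprim}, verify that the minimal subalgebra of any $\mathbf{A}\in V(\mathbf{P})$ is (isomorphic to) $\mathbf{P}$, and then invoke Theorem~\ref{thm:nBA->E[BA]}. The paper's version is slightly more streamlined in that, once $\mathbf{P}$ is known to be an $n\mathrm{BA}$ of cardinality $n$, the constants $\e_1,\dots,\e_n$ themselves enumerate $P$ and are closed under all operations, so there is no need to invoke primality again to produce closed terms or the separating terms $t_{ij}$ (distinctness of the $\e_i^{\mathbf{A}}$ in a nontrivial $\mathbf{A}$ is immediate from the $q$-axioms, as in Lemma~\ref{lem:subirr}).
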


\begin{proof} By Theorem \ref{prop:nbaprim} $\mathbf{P}$ is an $n$BA.  
If $\mathbf{A}
\in V(\mathbf{P})$, then the minimal subalgebra of $\mathbf A$ coincides with $\mathbf P$ itself, because the constants $\e_1,\dots,\e_n$ belongs to $A$ and are closed under the operations of the algebra.  By Theorem \ref{thm:nBA->E[BA]}  $\mathbf A$ is isomorphic to $\mathcal C(B_\mathbf A^*, \mathbf P)$.
\end{proof}

\end{document}